\documentclass[aps,showpacs,preprintnumbers,amsmath,amssymb,twocolumn,superscriptaddress,floatfix,nofootinbib,10pt,pre]{revtex4-1}

\usepackage[lined,boxed,commentsnumbered,ruled,vlined]{algorithm2e}
\usepackage{algpseudocode}
\usepackage{amssymb, amsmath, amsfonts}
\usepackage{amsthm, mathrsfs, amsopn}
\usepackage{bm,bbm}
\usepackage{booktabs}
\usepackage{caption}
\usepackage{centernot}
\usepackage{colortbl}
\usepackage{comment}
\usepackage{dcolumn}
\usepackage{epstopdf}
\usepackage{float}
\usepackage{graphicx}
\usepackage[bottom]{footmisc}
\usepackage{mathrsfs}
\usepackage{mathtools}
\usepackage{multirow}
\usepackage{physics}
\usepackage{ragged2e}
\usepackage{subfigure}
\usepackage{tabularx}
\usepackage{tikz}
\usepackage{verbatim}
\usepackage{xkcdcolors}
\usepackage{hyperref}
\usepackage{cleveref}
\usepackage{amsthm}

\DeclareCaptionJustification{justified}{\justifying}
\hypersetup{colorlinks=true, citecolor=orange, urlcolor=blue, linkcolor=magenta}

\newtheorem{theorem}{Theorem}
\newtheorem{prop}[theorem]{Proposition}
\newtheorem{lemma}[theorem]{Lemma}
\newtheorem{remark}[theorem]{Remark}

\begin{document}

\title{Turing patterns in Matrix-Weighted Networks}

\author{Anna Gallo}
\affiliation{IMT School for Advanced Studies, Piazza San Francesco 19, 55100 Lucca (Italy)}
\affiliation{INdAM-GNAMPA Istituto Nazionale di Alta Matematica `Francesco Severi', P.le Aldo Moro 5, 00185 Rome (Italy)}

\author{Wilfried Segnou}
\affiliation{Department of mathematics and Namur Institute for Complex Systems, naXys\\University of Namur, Rue Graf\'e 2, B5000 Namur (Belgium)}

\author{Timoteo Carletti}
\email{timoteo.carletti@unamur.be}
\affiliation{Department of mathematics and Namur Institute for Complex Systems, naXys\\University of Namur, Rue Graf\'e 2, B5000 Namur (Belgium)}

\date{\today}

\begin{abstract}
Diffusion-driven instability is a fundamental mechanism underlying pattern formation in spatially extended systems. In almost all existing works, diffusion across the links of the underlying network is modeled through scalar weights, possibly complemented by cross-diffusion terms that are homogeneous across links. In this work, we investigate the emergence of Turing patterns on Matrix Weighted Networks (MWNs), a recently introduced framework in which each edge is associated with a matrix weight. Focusing on the class of coherent MWNs, we provide a novel characterization of coherence in terms of node-dependent orthonormal matrices, showing that link transformations can be written as relative rotations between nodes.
This representation allows us to deal with coherent MWNs of any size and to introduce an orthonormal change of variables capable to reduce diffusion on a coherent MWN to diffusion on a standard weighted network with scalar weights. Building on this, we extend the classical Turing instability analysis to MWNs and derive the conditions under which a homogeneous equilibrium of the local dynamics loses stability due to matrix-weighted diffusion. \textcolor{black}{Moving beyond the dimensional constraints of previous approaches,} our results show how network topology, scalar weights, and inter-node transformations jointly shape pattern formation, and provide a constructive framework to analyze and design Turing patterns on matrix-weighted and higher-order networked systems. 
\end{abstract}


\maketitle
\section{Introduction}
\label{sec:intro}
Beautiful and colorful patterns emerge spontaneously in Nature and in human-made devices; they are the result of the interaction among the many basic units constituting the system~\cite{anderson,Nicolis1977,pastorsatorrasvespignani2010}. Very often, they originate from local reactions responsible for the creation and destruction of units, coupled with (long-range) diffusion capable of moving them apart. In this framework sits the elegant theory developed by Alan Turing in the context of morphogenesis~\cite{Turing52} and later extended to study self-organization in more general systems: under suitable conditions, any tiny, spatially dependent perturbation applied to a stable homogeneous stationary equilibrium, will drive the latter to a new, possibly patchy, solution, i.e., the spatial Turing pattern~\cite{pastorsatorrasvespignani2010}. The diffusion mechanism being the destabilizing factor, the above process is also known in the literature as a diffusion-driven instability.

The interaction among the basic units constituting the system, can thus be modeled by means of reaction-diffusion equations that govern the deterministic evolution of the concentrations both in time and space, the latter being a regular substrate~\cite{Murray2001} or a discrete one, e.g., a complex network~\cite{NM2010}. 

Since the work by Nakao and Mikhailov~\cite{NM2010}, scholars have studied the conditions for the emergence of Turing patterns in diverse scenarios ranging from directed networks~\cite{Asllani1} to non-normal ones~\cite{top_resilience,jtb}, passing through multiplex networks~\cite{Asllani2014,KHDG,busiello_homogLturing}. Motivated by the increasing interest in higher-order networks~\cite{battiston2020networks,BOCCALETTI20231,5.0151265,natphys,MillanEtAl2025}, Turing patterns have been studied on hypergraphs~\cite{carletti2020dynamical}, simplicial complexes~\cite{gao2023turing}, higher-order structures~\cite{muologallo,DORCHAIN2024115730} and recently in the framework of topological signals~\cite{turing_topological,MUOLO2024114312} (the interested reader can consult~\cite{rspa.2024.0235} for a recent review about Turing patterns on networks and higher-order networks). \textcolor{black}{In addition, in recent years, the classical framework of diffusion-driven instability on complex networks has been significantly expanded through novel analyses. For instance, in~\cite{asllani2025pattern}, evoking pattern formation theory, the authors explain the emergence of chimera states on complex networks, while, more recently, the article~\cite{khan2026bifurcation} has highlighted how the selection of complex bifurcation patterns on networks is shaped by the interplay between multi-scale local kinetics and the underlying topological structure.}

In all the above mentioned works the ``connections'' of the underlying substrate, being links of a network of some higher-order counterpart, i.e., hyperedges or simplexes, carry scalar weights that multiply species densities to encode the role of diffusion coefficients. A relevant and different case is the one where cross-diffusion is at play~\cite{GAMBINO20121112,fanelli_cianci,Gao_2020,cnad052}, namely, each node, or spatial location, receives a linear combination of species densities from each incoming link. In the case the state of each node is described by a $d$-dimensional vector, this process can be realized by multiplying the latter by a suitable $d\times d$ matrix. It is thus natural to consider that in the diffusion process, species densities undergo a transformation. To the best of our knowledge, in the literature, one can only find results dealing with the same ``diffusion'' matrix for all links in the framework of cross-diffusion models. 

In this work, we make one step further and consider the emergence of Turing patterns in the case where each link could be endowed with a different transformation matrix, by exploiting the recently introduced framework of \textit{Matrix Weighted Networks}~\cite{tian2025matrix,gallo2025global} (MWN).
\textcolor{black}{Unlike higher-order network frameworks such as hypergraphs or simplicial complexes, which extend the network topology to model non-dyadic interactions among groups of nodes, MWNs connect nodes in pairs, but enrich the definition of their coupling, by replacing scalar strengths with matrices. It follows that such a framework naturally accounts for multidimensional couplings and provides a novel geometric perspective to explore multi-dimensional patterns, which are different from the ones that emerge in frameworks based on multi-node interactions.
}

Our results apply to the class of MWNs satisfying the coherence condition~\cite{tian2025matrix}. Roughly speaking, the latter implies that any signal propagating through multiple oriented paths in a MWN always returns to its starting point without distortion, i.e., the product of the matrices weights across an oriented cycle is the identity. In this work, we will show that the dynamical system should preserve the MWN coherence, determining thus a strong interaction between dynamics and structure~\cite{MillanEtAl2025}. 

Turing instability manifests once the diffusive-like coupling is capable to turn unstable an otherwise locally stable spatially homogeneous solution. The coherence of a MWN and its invariance by the dynamics allow to show the existence of a suitable change of variables, thanks to which one can prove the existence of such stable spatially homogeneous solution. By resorting to a linear stability analysis about the latter solution, we can determine conditions on the spectrum of the MWN Laplace matrix to ensure the emergence of Turing patterns. \textcolor{black}{It is worth noting that the linearized stability analysis developed herein shares a strong conceptual foundation with the celebrated \textit{Master Stability Function} (MSF) framework widely adopted in synchronization theory~\cite{pecora1998master}. Both approaches rely on decoupling the linearized problem around a reference state, via a projection on the eigenvectors of the network Laplace matrix, by obtaining in this way a $1$-parameter family of linear systems depending each one on a given eigenvalue. The novelty of our MWN formulation, with respect to the classical MSF, lies in the fact that it allows us to model and analyze systems where coordinate transformations are inherently heterogeneous and embedded directly onto the network edges. Our theoretical contribution shows that, when the coherence condition is satisfied, it is possible to disentangle these edge-specific transformations via a global change of variables $\mathcal{S}$, effectively mapping a structurally disordered matrix-coupled system back to a tractable scalar-like dispersion relation.}
\textcolor{black}{Finally, we emphasize that the key novelty of our approach lies in shifting the complexity directly onto the network topology. Indeed, the existing literature already includes several works dealing with generalizations of Turing patterns on networks, but, while they fundamentally rely on standard networks where the interaction between two nodes is defined by a single scalar weight, the proposed matrix-weighted framework naturally enables multi-dimensional cross-coupling and spatial coordinate rotations between nodes.}

\textcolor{black}{To sum up, the main theoretical contributions and innovative aspects of this paper are multifold. First, in Section~\ref{sec:char}, we overcome the dimensional constraints of previous approaches by introducing a novel characterization of coherent MWNs, by providing a framework applicable to networks of arbitrary size. Second, in Section~\ref{sec:TTMWN}, we prove that coherence condition allows us to disentangle edge-specific transformations via a global change of variables and map the multi-dimensional reaction-diffusion process onto a standard scalar-like dispersion relation. Finally, in Section~\ref{sec:results}, we show how to build coherent matrix networks from any standard network topology, and validate our theoretical framework by using numerical simulations of the Stuart-Landau model~\ref{ssec:SL}, of an abstract model with a given rotation invariance~\ref{ssec:abst}, and of the Lorenz model~\ref{ssec:Lorenz}. For a summary of our findings and an outline of future research directions, see Section~\ref{sec:conc}.}


\section{Characterization of Weighted Matrix Networks}
\label{sec:char}

The aim of this section is to briefly introduce {\em Matrix Weighted Networks} (MWNs)~\cite{tian2025matrix}, present a novel characterization of coherent MWNs, and introduce the main tools we need to develop a Turing theory for dynamical systems coupled via MWNs. As in~\cite{tian2025matrix}, we restrict our attention to reciprocal interactions. Accordingly, without loss of generality, we present the discussion in terms of symmetric (undirected) networks.\\

Let $G=(V,E,\{\mathbf{W}_{ij}\}_{i,j=1}^n)$ be a MWN with $n=|V|$ nodes and $m=|E|$ edges. To any existing edge, $(i,j)\in E$, we associate a weighted matrix $\mathbf{W}_{ij}\in \mathbb{R}^{d\times d}$ defined as
\begin{align}
\label{eq:matrix}
    \mathbf{W}_{ij} = w_{ij}\mathbf{R}_{ij}\, ,
\end{align}
where $w_{ij}:=||\mathbf{W}_{ij}||_2>0$ is the scalar weight and $\mathbf{R}_{ij}\in\mathbb{R}^{d\times d}$ is a transformation, such that $\|\mathbf{R}_{ij}\|_2=1$. In the following, we assume the latter to belong to the group of rotations and  $\mathbf{W}_{ij}=\mathbf{W}_{ji}^\top$ for all $i$ and $j$, meaning $w_{ij}=w_{ji}$ and $\mathbf{R}_{ij}=\mathbf{R}_{ji}^\top$. 

Let $d_i=\sum_j w_{ij}$ denote the strength of node $i$, and define the \textit{supra-degree matrix} $\mathcal{D}=\mathbf{D}\otimes \mathbf{I}_d$, where $\mathbf{D}=\mathrm{diag}(d_1,\dots,d_n)$, $\mathbf{I}_d$ is the $d$-dimensional identity matrix and $\otimes$ denotes the Kronecker product. We can then define the {\em supra-Laplace matrix} as
\begin{align}
\label{eq:supraL}
\mathcal{L}=\mathcal{D}-\mathcal{W}\, ,
\end{align}
where $\mathcal{W}$ is the \textit{supra-weight matrix}, which has a block structure, with the $(i,j)$-block being $\mathbf{W}_{ij}$.\\

\paragraph*{Coherence condition and characterization of coherent MWN.} A MWN is said to be \textit{coherent}~\cite{tian2025matrix} if, for every oriented cycle composed by $k$ different edges, $\mathcal{C}:=((i_1,i_2),(i_2,i_3),\dots,(i_k,i_1))$, the product of the transformation matrices along the cycle equals the identity, i.e.,
\begin{align}  
\label{eq:cohercond1}
\prod_{(i,j)\in\mathcal{C}}\mathbf{R}_{ij}=\mathbf{I}_d\,.
\end{align}
This property implies that nodes can be partitioned into distinct groups, with the transformation along any walk between nodes within the same group being the identity. Furthermore, the transformation from any node in one group to any node in another group is the same.

In a coherent MWN, one can define the block diagonal matrix $\mathcal{S}$, where the $i$--th block is the $d\times d$ matrix, $\mathbf{O}_{1i}$, representing the product of transformations, i.e., rotations in the present setting, along any oriented walk starting from node $1$ and ending at node $i$ (see Eq.~\eqref{eq:matS} below for further details). Since the composition of rotations is itself a rotation, each $\mathbf{O}_{1i}$ is also a rotation. Importantly, because of the coherence condition, the choice of the first node can be arbitrary (for the sake of convenience, we hereby label it as node $1$), and if several paths exist, the choice of the walk is also arbitrary and does not affect the results. Moreover, in the partition associated with the MWN, the matrix $\mathbf{O}_{1i}$ represents the composed transformation between any node in the group to which $1$ belongs and any node in the group of $i$. To check that a given MWN is coherent can be, in principle, computationally costly because it requires to verify~\eqref{eq:cohercond1} for all paths; for this reason the theory has been so far applied only to simple hand-made examples of coherent MWN~\cite{tian2025matrix,gallo2025global}. One goal of this work is to fill this gap and to propose an algorithm capable to create coherent MWN of any size, based on the following proposition.\\

\begin{prop}[Characterization of coherent MWNs]\label{prop:coherence}
Let $G$ be a MWN, whose topology is given by an oriented, symmetric network. The following are equivalent.
\begin{itemize}
    \item[(a)] There exist $n$ orthonormal matrices $\mathbf{Q}_1,\dots,\mathbf{Q}_n\in O(d)$ such that, for every edge $(i,j)\in E$,
    \begin{align}
        \mathbf{W}_{ij} = w_{ij}\mathbf{Q}_i^\top \mathbf{Q}_j\, ,
    \end{align}
    namely $\mathbf{R}_{ij}=\mathbf{Q}_i^\top \mathbf{Q}_j$.
    \item[(b)] For every $(i,j)\in E$, the link matrix $\mathbf{R}_{ij}$ is orthonormal and, for any oriented cycle 
    $\mathcal C=((v_0,v_1),(v_1,v_2),\dots,(v_{\ell-1},v_{\ell}))$ in $G$, with $v_{\ell}=v_0$,
    \begin{align}
        \prod_{s=1}^\ell \mathbf{R}_{v_{s-1}v_s} = \mathbf{I}_d,
    \end{align}
    i.e., the graph $G$ is coherent.
\end{itemize}
\end{prop}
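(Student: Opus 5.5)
The plan is to prove the two implications separately. The direction (a)$\Rightarrow$(b) is a telescoping computation. The direction (b)$\Rightarrow$(a) builds the matrices $\mathbf{Q}_i$ by transporting an arbitrary orthonormal frame along walks, and uses coherence to show this is well defined.

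\emph{(a)$\Rightarrow$(b).} Assume $\mathbf{R}_{ij}=\mathbf{Q}_i^\top\mathbf{Q}_j$ with $\mathbf{Q}_i\in O(d)$.
\begin{itemize}
\item[(1)] Each $\mathbf{R}_{ij}$ is orthonormal, because $O(d)$ is closed under products and transposes.
\item[(2)] The representation is consistent with the standing symmetry assumption, since $\mathbf{R}_{ji}=\mathbf{Q}_j^\top\mathbf{Q}_i=\mathbf{R}_{ij}^\top$.
\item[(3)] For an oriented cycle $v_0,\dots,v_\ell=v_0$ the product telescopes:
\begin{align*}
\prod_{s=1}^\ell \mathbf{R}_{v_{s-1}v_s}=\mathbf{Q}_{v_0}^\top\mathbf{Q}_{v_1}\mathbf{Q}_{v_1}^\top\mathbf{Q}_{v_2}\cdots\mathbf{Q}_{v_{\ell-1}}^\top\mathbf{Q}_{v_\ell}=\mathbf{Q}_{v_0}^\top\mathbf{Q}_{v_0}=\mathbf{I}_d .
\end{align*}
\end{itemize}
Hence the network is coherent.

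\emph{(b)$\Rightarrow$(a).} Work on each connected component separately. In each component fix a root node $r$ and set $\mathbf{Q}_r=\mathbf{I}_d$; any fixed element of $O(d)$ would also do. For any other node $i$ in the component, pick an oriented walk $r=v_0,v_1,\dots,v_k=i$ and define
\begin{align*}
\mathbf{Q}_i:=\Big(\prod_{s=1}^k \mathbf{R}_{v_{s-1}v_s}\Big)^{\!\top},
\end{align*}
so that $\mathbf{Q}_i^\top$ equals the matrix $\mathbf{O}_{ri}$ introduced before the proposition. Each $\mathbf{Q}_i$ lies in $O(d)$ as a product of orthonormal matrices.

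\emph{Well-definedness (main obstacle).} I must show that $\mathbf{O}_{ri}$ does not depend on the chosen walk. Let $P$ and $P'$ be two walks from $r$ to $i$. Concatenating $P$ with the reverse of $P'$ gives a closed walk based at $r$. Along the reversed part the factors are $\mathbf{R}_{ji}=\mathbf{R}_{ij}^\top=\mathbf{R}_{ij}^{-1}$, so the product over the closed walk is $\mathbf{O}_P\,\mathbf{O}_{P'}^{-1}$. It therefore suffices to show that every closed walk has product $\mathbf{I}_d$. This is the delicate point, because hypothesis (b) concerns cycles, and the coherence definition speaks of cycles with distinct edges, not arbitrary closed walks. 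I will argue by induction on the length of the closed walk.
\begin{itemize}
\item If the closed walk revisits a vertex before returning to its start, split it at the repeated vertex into two shorter closed walks, using that the product is multiplicative under concatenation; a conjugation appears if the base point shifts, and this is harmless because the target is the identity.
\item A backtracking step $i\to j\to i$ contributes $\mathbf{R}_{ij}\mathbf{R}_{ji}=\mathbf{I}_d$ and can be removed.
\item What remains is a simple cycle, for which (b) gives the identity directly.
\end{itemize}

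\emph{Edge relation.} For an edge $(i,j)$, take the walk to $i$ and extend it by the edge $(i,j)$. Walk-independence then gives $\mathbf{O}_{rj}=\mathbf{O}_{ri}\mathbf{R}_{ij}$, and by orthogonality
\begin{align*}
\mathbf{R}_{ij}=\mathbf{O}_{ri}^\top\mathbf{O}_{rj}=\mathbf{Q}_i\mathbf{Q}_j^\top .
\end{align*}
This is the transposed convention of (a). To match (a), I would instead define $\mathbf{Q}_i:=\mathbf{O}_{ri}^\top$ and recheck the order of the factors, or equivalently use the reversed-walk product, so that $\mathbf{R}_{ij}=\mathbf{Q}_i^\top\mathbf{Q}_j$ holds exactly. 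Getting this order of factors right is a bookkeeping step to verify carefully. Finally, $\mathbf{W}_{ij}=w_{ij}\mathbf{R}_{ij}$ gives (a).
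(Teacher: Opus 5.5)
Your proof follows the paper's route: a telescoping product for (a)$\Rightarrow$(b), and for (b)$\Rightarrow$(a) transporting a frame from a root node, with coherence making the transported matrix path-independent. Your reduction of arbitrary closed walks to simple cycles and backtracks, together with the per-component treatment, makes rigorous a step the paper only asserts. The paper claims that concatenating $P_{kj}$ with $P'_{kj}$, or $P_{ki}$ with the edge $(i,j)$ and the reverse of $P_{kj}$, ``forms a cycle'', which is false in general because the pieces may share vertices.

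The only slip is in your closing bookkeeping. Your proposed remedy $\mathbf{Q}_i:=\mathbf{O}_{ri}^\top$ is the same as your original definition, so it fixes nothing. The identity $\mathbf{R}_{ij}=\mathbf{O}_{ri}^\top\mathbf{O}_{rj}$ you derived gives (a) at once with $\mathbf{Q}_i:=\mathbf{O}_{ri}$ (no transpose). This is exactly the paper's $\widetilde{\mathbf{Q}}_i=\mathbf{Q}_k\mathbf{R}_{P_{ki}}$ with $\mathbf{Q}_k=\mathbf{I}_d$.
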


\begin{proof}
\textit{(a) $\Rightarrow$ (b)}. Assume that there exist orthonormal matrices $\mathbf{Q}_1,\dots,\mathbf{Q}_n\in O(d)$ such that $\mathbf{W}_{ij} = w_{ij}\mathbf{Q}_i^\top \mathbf{Q}_j$ for all $(i,j)\in E$. Then, for any $(i,j)\in E$ we can define
\begin{align}
\label{eq:defQtQ}
    \mathbf{R}_{ij}=\mathbf{Q}_i^\top \mathbf{Q}_j,
\end{align}
which is orthonormal, being the product of orthonormal matrices. Now, consider any path $P:i=v_0,v_1,\dots,v_\ell=j$. The product of transformations along the path is
\begin{align}
    \prod_{s=1}^\ell \mathbf{R}_{v_{s-1}v_s} 
    &= \prod_{s=1}^\ell \mathbf{Q}_{v_{s-1}}^\top \mathbf{Q}_{v_s}\nonumber\\
    &= \mathbf{Q}_i^\top \left(\prod_{s=1}^{\ell-1} \mathbf{Q}_{v_s}\mathbf{Q}_{v_s}^\top\right)\mathbf{Q}_j
    = \mathbf{Q}_i^\top \mathbf{Q}_j\, .
\end{align}
If $i=j$, the path is a cycle, and the product equals $\mathbf{I}_d$, confirming the coherence condition holds.

\medskip
\textit{(b) $\Rightarrow$ (a)}.  
Assume that for every edge $(i,j)\in E$, $\mathbf{R}_{ij}$ is orthonormal, and, for every cycle $\mathcal C$, the product of the rotations along it is the identity $\mathbf{I}_d$.  
Fix a reference node $k\in V$ and select an arbitrary orthonormal matrix $\mathbf{Q}_k$ (e.g., $\mathbf{I}_d$).  
For each $j\in V$, consider a path $P_{kj}:k=v_0,v_1,\dots,v_\ell=j$, and define
\begin{align}
    \widetilde{\mathbf{Q}}_j := \mathbf{Q}_k \mathbf{R}_{P_{kj}}, \qquad 
    \mathbf{R}_{P_{kj}} := \prod_{s=1}^\ell \mathbf{R}_{v_{s-1}v_s}.
\end{align}
Notice that, if ${P'_{kj}}$ is another path from $k$ to $j$, concatenating ${P_{kj}}$ and ${P'_{kj}}$ forms a cycle. Then, by coherence,
\begin{align}
    \mathbf{R}_{P_{kj}} \mathbf{R}_{{P'_{kj}}}^{-1} = \mathbf{I}_d,
\end{align}
so $\mathbf{R}_{P_{kj}}=\mathbf{R}_{{P'_{kj}}}$ and $\widetilde{\mathbf{Q}}_j$ is well defined.  

By construction,
\begin{align}
    \widetilde{\mathbf{Q}}_i^\top \widetilde{\mathbf{Q}}_j
    = (\mathbf{Q}_k \mathbf{R}_{P_{ki}})^\top (\mathbf{Q}_k \mathbf{R}_{P_{kj}})
    = \mathbf{R}_{P_{ki}}^\top \mathbf{R}_{P_{kj}}.
\end{align}
Now, for any edge $(i,j) \in E$, consider the cycle formed by 
concatenating the path $P_{ki}$, the edge $(i,j)$, and the reverse 
path $P_{jk}$. By coherence, 
\begin{align}
    \mathbf{R}_{P_{ki}} \mathbf{R}_{ij} \mathbf{R}_{P_{jk}}^{-1} = \mathbf{I}_d.
\end{align}
Since $\mathbf{R}_{P_{jk}}$ is orthonormal, $\mathbf{R}_{P_{jk}}^{-1} = \mathbf{R}_{P_{jk}}^\top = \mathbf{R}_{P_{kj}}$. Therefore,
\begin{align}
    \mathbf{R}_{ij} = \mathbf{R}_{P_{ki}}^\top \mathbf{R}_{P_{kj}} = 
    \widetilde{\mathbf{Q}}_i^\top \widetilde{\mathbf{Q}}_j.
\end{align}
\end{proof}

\textcolor{black}{It is worth emphasizing that the result presented in Proposition~\ref{prop:coherence}, as well as the cycle-basis property established in Lemma~\ref{lemma:cohebasis}, strictly depend on the assumption of symmetric networks, where $\mathbf R_{ij}^\top = \mathbf R_{ji}$.}

Notice that the representation 
\begin{align}
\mathbf{R}_{ij} = \mathbf{Q}_i^\top \mathbf{Q}_j, \qquad (i,j)\in E,
\end{align}
is unique up to a global orthonormal transformation. 
Indeed, if $\{\mathbf{Q}_i\}_{i=1}^n$ satisfy condition \textit{(a)}, then for any 
$\mathbf{O}\in O(d)$ the matrices 
$\mathbf{Q}_i' = \mathbf{Q}_i \mathbf{O}$ 
define the same network, since
\begin{align}
\mathbf{Q}_i'^\top \mathbf{Q}_j' 
= \mathbf{O}^\top \mathbf{Q}_i^\top \mathbf{Q}_j \mathbf{O}
= \mathbf{Q}_i^\top \mathbf{Q}_j.
\end{align}
Thus, the family $\{\mathbf{Q}_i\}_{i=1}^n$ is determined only up to a right action of $O(d)$. For more details about the characterization of coherent MWNs, we refer the interested reader to Appendix~\ref{app:coherence}.\\


\paragraph*{Supra-Laplace matrix and Identity-Transformed MWNs.}

Let us introduce the block diagonal matrix $\mathcal{S}$, whose $i$--th block is the $d\times d$ matrix, $\mathbf{O}_{1i}$, above defined
in formula:
\begin{align}
\label{eq:matS}
\mathcal{S}=\left(
\begin{matrix}
 \mathbf{I}_d & 0 & \dots & \dots & 0\\
 0 & \mathbf{O}_{12} & \dots & \dots & 0\\
 \vdots & \vdots & \ddots & \vdots & \vdots\\
 0 & \dots & \dots & 0 & \mathbf{O}_{1n}
\end{matrix}\right)\, .
\end{align}
Let us observe that, in the following, without loss of generality, we will assume each block to contain a single node. On the contrary, if several nodes are contained in each block, the above formula is valid up to a small abuse of notation because we identified node $i$ with the $i$--th block to which it belongs; this will not change the results, because all nodes in the $i$--th block are equivalent, being connected by links whose transformation is the identity matrix. This is the general MWNs framework introduced in~\cite{tian2025matrix}.

Given any $\vec{u}\in\mathbb{R}^{d}$, the vector $\vec{U}=\mathcal{S}^\top(\vec{1}_n\otimes \vec{u})$ can be shown to be an eigenvector of the supra-Laplacian ${\mathcal{L}}$ with eigenvalue $0$ if and only if the MWN is coherent~\cite{tian2025matrix}. 

By using the matrix $\mathcal{S}$ and the supra-Laplacian $\mathcal{L}$, we can define a second supra-Laplace matrix 
\begin{align}
\bar{\mathcal{L}} = \mathcal{S}\mathcal{L}\mathcal{S}^\top\, ,
\end{align}
which corresponds to the supra-Laplace matrix where all the transformation matrices $\mathbf{O}$ 
have been replaced by the identity matrix $\mathbf{I}_d$, as described in~\cite{tian2025matrix}. This new matrix depends only on the scalar weights and the topology of the network; indeed, it can be written as:
\begin{align}
    \label{eq:Lbar}
    \bar{\mathcal{L}}=\bar{\mathbf{L}}\otimes \mathbf{I}_d\, ,
\end{align}
where $\bar{\mathbf{L}}=\mathbf{D}-\mathbf{A}^{(w)}$, and $\mathbf{A}^{(w)}$ is the weighted adjacency matrix of the underlying network, i.e., each $(i,j)$--entry of $\mathbf{A}^{(w)}$ equals $w_{ij}$.


\section{Turing theory on MWN\MakeLowercase{s}}
\label{sec:TTMWN}

Let us consider a $d$-dimensional system whose state variable $\vec{x}\in\mathbb{R}^d$ evolves in time according to the ordinary differential equation (ODE)
\begin{align}
    \label{eq:isolated}
    \frac{d\vec{x}}{dt}=\vec{f}(\vec{x})\,,
\end{align}
where $\vec{f}:A\rightarrow\mathbb{R}^d$ is some nonlinear function and $A$ an open subset of $\mathbb{R}^d$. Let us assume to have $n$ identical copies of the above ODE, each one being identified by the state variable $\vec{x}_j\in\mathbb{R}^d$, $j=1,\dots,n$, and assume moreover to couple them via a MWN. The time evolution of the state variable anchored to the $i$--th node, $\vec{x}_i$, is then described by
\begin{align}
\label{eq:dynsyscoupled2}
\frac{d\vec{x}_i}{dt}=& \vec{f}(\vec{x}_i)- \sum_j \mathcal{L}_{ij}\vec{h}(\vec{x}_j),\notag\\
=&\vec{f}(\vec{x}_i) -\sum_j {w}_{ij}\mathbf{I}_d \vec{h}(\vec{x}_j) + \sum_j \mathbf{W}_{ij}\vec{h}(\vec{x}_j)\, ,
\end{align}
where we hypothesized that the coupling function $\vec{h}$ does not depend on the node index. By defining the vector $\vec{x}=(\vec{x}_1^\top,\dots,\vec{x}_n^\top)^\top\in \mathbb{R}^{nd}$, we can rewrite Eq.~\eqref{eq:dynsyscoupled2} as
\begin{align}
\label{eq:dynsyscoupledLmw2}
\frac{d\vec{x}}{dt}={f}_*(\vec{x})- \mathcal{L}h_*(\vec{x})\, ,
\end{align}
where $f_*$ and $h_*$ act component-wise resulting into a $nd$-dimensional vector, namely
\begin{align}
 {f}_*(\vec{x}):=(\vec{f}(\vec{x}_1)^\top,\dots,\vec{f}(\vec{x}_n)^\top)^\top\, ,
\end{align}
and similarly for $h_*$.

A Turing instability occurs if  system~\eqref{eq:dynsyscoupledLmw2} admits a stable homogeneous stationary solution, $\vec{x}_j=\vec{x}^*$ for all $j=1,\dots,n$, once we silence the coupling, i.e., we set $\mathcal{L}=0$, that turns out unstable once the coupling is taken into account. The instability is therefore driven by the coupling and leads the system to a new equilibrium, possibly heterogeneous, i.e., spatially dependent, known as a Turing pattern.

Let us, thus, assume the ODE~\eqref{eq:isolated} possesses a stable stationary solution, $\vec{x}^*$, we would like the latter to be also a solution of the coupled system, namely $\vec{x}_j(t)\equiv\vec{x}^*$, for all $j=1,\dots, n$ and $t\geq 0$, to solve~\eqref{eq:dynsyscoupled2}. In the case the underlying support is a (connected) network with positive weight, then the above claim holds true~\cite{fujisaka1983stability,Pecora_etal97,pecora1998master} because the network Laplace matrix admits the eigenvector $\vec{1}_n=(1,\dots,1)^\top\in\mathbb{R}^n$ with eigenvalue $\Lambda^{(1)}=0$. For general MWNs the latter fact does not arise because the transformation matrices ``mix'' the components of the state vector; however one can ``disentangle'' those modes by showing that $\vec{X}^*=\mathcal{S}^\top(\vec{1}_n\otimes \vec{x}^*)$ is a stationary solution of~\eqref{eq:dynsyscoupledLmw2}. The latter claim holds true if the MWN is coherent~\cite{tian2025matrix,gallo2025global} and if the dynamical system~\eqref{eq:dynsyscoupledLmw2} preserves the coherence, namely it is invariant with respect to the (product of) matrices $\mathbf{R}_{ij}$~\cite{gallo2025global}. Namely, for all $i=1,\dots,n$ and all $\vec{x}\in\mathbb{R}^d$, the following conditions must hold true:
\begin{align}
\label{eq:condfO}
 \mathbf{O}_{1i}\vec{f}(\mathbf{O}_{1i}^\top \vec{x}) = \vec{f}(\vec{x})\quad\text{and}\quad \mathbf{O}_{1i}\vec{h}(\mathbf{O}_{1i}^\top \vec{x}) = \vec{h}(\vec{x}) \, .
\end{align}

The time derivative of $\vec{X}^*$ clearly vanishes, being the latter a constant vector. On the other hand, we have
\begin{eqnarray*}
 \vec{f}_*(\vec{X}^*)-\mathcal{L}\vec{h}_*(\vec{X}^*)&=&\left(\begin{smallmatrix}
     \vec{f}(\vec{x}^*)\\ \vec{f}(\mathbf{O}^\top_{12}\vec{x}^*)\\\vdots\\ \vec{f}(\mathbf{O}^\top_{1n}\vec{x}^*)
 \end{smallmatrix}\right)-\mathcal{L}\left(\begin{smallmatrix}
     \vec{h}(\vec{x}^*)\\ \vec{h}(\mathbf{O}^\top_{12}\vec{x}^*)\\\vdots\\ \vec{h}(\mathbf{O}^\top_{1n}\vec{x}^*)
\end{smallmatrix}\right)\\
&=&\left(\begin{smallmatrix}
     \vec{f}(\vec{x}^*)\\ \mathbf{O}^\top_{12}\vec{f}(\vec{x}^*)\\\vdots\\ \mathbf{O}^\top_{1n}\vec{f}(\vec{x}^*)
 \end{smallmatrix}\right)-\mathcal{L}\left(\begin{smallmatrix}
     \vec{h}(\vec{x}^*)\\ \mathbf{O}^\top_{12}\vec{h}(\vec{x}^*)\\\vdots\\ \mathbf{O}^\top_{1n}\vec{h}(\vec{x}^*)
 \end{smallmatrix}\right)\,,
\end{eqnarray*}
where we used the invariance of $\vec{f}$ and $\vec{h}$ given by~\eqref{eq:condfO}. We can thus conclude that
\begin{align}
\label{eq:equil}
 \vec{f}_*(\vec{X}^*)&-\mathcal{L}\vec{h}_*(\vec{X}^*)=\\
&=\mathcal{S}^\top (\vec{1}_n\otimes \vec{f}(\vec{x}^*))-\mathcal{L}\mathcal{S}^\top (\vec{1}_n\otimes \vec{h}(\vec{x}^*))=0\notag\, ,
\end{align}
because $\vec{f}(\vec{x}^*)=0$ and $\mathcal{L}\mathcal{S}^\top (\vec{1}_n\otimes \vec{h}(\vec{x}^*))=0$.

The emergence of Turing patterns relies on the proof of the instability of the stationary solution $\vec{X}^*$. To achieve this goal we rewrite the $(dn)$-dimensional state vector as $\vec{x}=\vec{X}^*+\delta\vec{x}$, where $\delta\vec{x}$ is ``small'' perturbation; if $\delta\vec{x}(t)$ will converge to $0$, then the equilibrium $\vec{X}^*$ is locally \textcolor{black}{asymptotically} stable, and unstable otherwise. The time evolution of $\delta\vec{x}$ can be obtained by performing a linear stability analysis of ~\eqref{eq:dynsyscoupledLmw2} about the stationary solution $\vec{X}^*$:
\begin{align}
\label{eq:lin1}
\frac{d\delta\vec{x}}{dt}=\mathbf{J}_{f_*}(\vec{X}^*)\delta\vec{x}-\mathcal{L}\mathbf{J}_{h_*}(\vec{X}^*)\delta\vec{x}\, ,
\end{align}
where $\mathbf{J}_{f_*}(\vec{X}^*)$ and $\mathbf{J}_{h_*}(\vec{X}^*)$ are respectively the Jacobian of ${f_*}$ and ${h_*}$ evaluated on the equilibrium $\vec{X}^*$. The latter equation can be written in ``components'' by defining $\delta\vec{x}=(\delta x_1^\top,\dots,\delta x_n^\top)^\top$ and obtain 
\begin{align}
\frac{d\delta \vec{x}_j}{dt}
 =\mathbf{J}_{f}(\mathbf{O}_{1j}^\top\vec{x}^*)
 \delta \vec{x}_j-\sum_\ell \mathcal{L}_{j\ell}\mathbf{J}_{h}(\mathbf{O}_{1\ell}^\top\vec{x}^*)
 \delta \vec{x}_\ell\, ,
\end{align}

The invariance condition~\eqref{eq:condfO} returns the following relations satisfied by the Jacobian matrices
\begin{align}
 \mathbf{J}_{f}(\mathbf{O}_{1j}^\top\vec{x}^*)=\mathbf{O}_{1j}^\top\mathbf{J}_{f}(\vec{x}^*)\mathbf{O}_{1j}
\end{align}
and
\begin{align}\mathbf{J}_{h}(\mathbf{O}_{1j}^\top\vec{x}^*)=\mathbf{O}_{1j}^\top\mathbf{J}_{h}(\vec{x}^*)\mathbf{O}_{1j}\, .
\end{align}
Hence, we can conclude that
\begin{align}
\frac{d\delta \vec{x}_j}{dt}
 =\mathbf{O}_{1j}^\top\mathbf{J}_{f}(\vec{x}^*)\mathbf{O}_{1j}
 \delta \vec{x}_j-\sum_\ell \mathcal{L}_{j\ell}\mathbf{O}_{1\ell}^\top\mathbf{J}_{h}(\vec{x}^*)
 \mathbf{O}_{1\ell}\delta \vec{x}_\ell\, ,
\end{align}
or equivalently by defining $\delta\vec{w}_j=\mathbf{O}_{1j}\delta\vec{x}_j$
\begin{eqnarray}
\label{eq:lin3}
\frac{d\delta \vec{w}_j}{dt}
 &=&\mathbf{J}_{f}(\vec{x}^*)
 \delta \vec{w}_j-\sum_\ell \mathbf{O}_{1j}\bar{L}_{j\ell}\mathbf{O}_{1\ell}^\top\mathbf{J}_{h}(\vec{x}^*)
 \delta \vec{w}_\ell\notag\\
 &=&\mathbf{J}_{f}(\vec{x}^*)
 \delta \vec{w}_j-\sum_\ell \bar{{L}}_{j\ell}\mathbf{J}_{h}(\vec{x}^*)
 \delta \vec{w}_\ell\, .
\end{eqnarray}
In the following, we will refer to $\delta\vec{w}_j$ as ``rotated'' variables because of the application of the rotation matrices $\mathbf{O}_{1j}$ onto the original variables, $\delta\vec{x}_j$. Let us observe that by using the stack vectors $\delta\vec{x}$ and $\delta\vec{w}$, the above change of variables can be rewritten as $\delta\vec{w}=\mathcal{S}\delta\vec{x}$. Let us finally observe that to get~\eqref{eq:lin3}, we made use of the definition of the supra-Laplace matrix $\bar{\mathcal{L}}$. 

To prove the stability of $\delta \vec{w}_j$ and hence of $\delta \vec{x}_j$, we exploit the existence of an orthonormal basis for the Laplace matrix $\bar{\mathbf{L}}$, i.e., $\bar{\phi}^{(\alpha)}$, $\Lambda^{(\alpha)}$, $\alpha=1,\dots,n$, to project $\delta \vec{w}_j$ onto the latter
\begin{align}
\delta \vec{w}_j=\sum_\alpha
\delta \hat{w}_\alpha\bar{\phi}^{(\alpha)}_j\, .
\end{align}
In this way, Eq.~\eqref{eq:lin3} returns
\begin{align}
\label{eq:lin4}
\sum_\alpha\frac{d\delta \hat{w}_\alpha}{dt}
\bar{\phi}^{(\alpha)}_j=\mathbf{J}_{f}(\vec{s})\sum_\alpha
\delta \hat{w}_\alpha\bar{\phi}^{(\alpha)}_j- \sum_\alpha \Lambda^{(\alpha)}\mathbf{J}_{h}(\vec{s})
\delta \hat{w}_\alpha\bar{\phi}^{(\alpha)}_j\, .
\end{align}
By left multiplying by $\bar{\phi}^{(\alpha)}$ and by using the orthonormality of eigenvectors we eventually obtain
\begin{eqnarray}
\frac{d\delta \hat{w}_\alpha}{dt}
 &=& \left[\mathbf{J}_{f}(\vec{x}^*)-\Lambda^{(\alpha)}\mathbf{J}_{h}(\vec{x}^*)\right]
\delta \hat{w}_\alpha \quad \forall \alpha=1,\dots,n\notag\\
&=:&\mathbf{J}_{\alpha}\delta \hat{w}_\alpha\, .
\end{eqnarray}

The above linear system contains the information about the dynamics and the coupling via the Jacobian matrices, while the MWN enters only via the eigenvalues of the supra-Laplace matrix $\bar{\mathcal{L}}$, depending only on the scalar weights. The perturbation $\delta \vec{w}_j$ does not converge to zero if there exists at least one $\alpha$ for which $\delta\hat{w}_\alpha$ does not converge to zero either. 

In conclusion, by defining the {\em dispersion relation}, $\lambda(\Lambda^{(\alpha)})$, to be the largest real part of the eigenvalues of the matrix $\mathbf{J}_{\alpha}$, then the existence of $\alpha$ such that $\lambda(\Lambda^{(\alpha)})>0$ determines the instability conditions we were looking for.

{\color{black}
In conclusion, the Turing instability threshold is exactly determined by the eigenvalues $\Lambda^{(\alpha)}$ of the Laplace matrix. It is known that by using spectral graph theory one can related the eigenvalues to the average degree; hence, one could link such a spectral dependency to the network structural properties. Specifically, for an Erd\H{o}s-R\'{e}nyi topology with connection probability $p$ and average degree $\langle k\rangle=np$, the bulk of the Laplacian spectrum can be approximated using the effective support range $[\langle k\rangle-2\sqrt{\langle k\rangle},\langle k\rangle+2\sqrt{\langle k\rangle}]$. This means that an increase of the average degree, $\langle k \rangle$, shifts the entire spectrum towards higher values leading to conclude that denser networks allow higher-order, highly-frequent spatial modes to cross the critical threshold $\Lambda_{\text{crit}}$, thereby directly altering the frequency of the emerging patterns. Let us however observe that the average degree of the underlying network could not be a determining factor for the onset of Turing pattern because the latter depends on the coherence property of the MWN. Stated differently, the same underlying network could support or not Turing pattern depending on the choice of the matrices $\mathbf{R}_{ij}$.}

In the remaining sections, we will present the above theory applied to three relevant dynamical systems, but its validity clearly goes beyond those examples.

\begin{table*}[ht]
\centering
\label{tab:symbols}
\begin{tabular}{lll}
\hline
\textbf{Symbol} & \textbf{Description} & \textbf{Dimension / Space} \\ \hline
$n$ & Number of nodes in the network & $\mathbb{N}$ \\
$\mathbf{R}_{ij}$ & Orthogonal transformation matrix & $SO(d)$ \\
$\mathbf{W}_{ij}$ & Matrix weight of the edge between nodes $i$ and $j$ & $\mathbb R^{n \times n}$ \\ 
$\mathcal S$ & Synchronization transformation operator & $\mathbb R^{dn\times dn}$ \\
$\mathcal L$ & Supra-Laplacian matrix of the MWN & $\mathbb R^{dn\times dn}$ \\
$\bar{\mathbf L}$ & Underlying scalar weighted Laplacian matrix & $\mathbb R^{n\times n}$ \\
$\bar{\mathcal L}$ & Identity-transformed supra-Laplacian matrix ($\bar{\mathbf L}\otimes\mathbf I_d$) & $\mathbb R^{nd\times nd}$  \\
$\Lambda^{(\alpha)}$ & Eigenvalues of the scalar network Laplacian $\bar{\mathbf L}$ & $\mathbb R$ \\
$\vec x_j$ & State vector of the $j$-th node (original variables) & $\mathbb R^d$ \\
$\vec w_j$ & Rotated state vector of the $j$-th node ($\mathbf O_{1j}\vec x_j$) & $\mathbb R^d$ \\
$\vec x^*$ & Stable stationary equilibrium of the isolated local dynamics & $\mathbb R^d$ \\
$\vec X^*$ & Transformed-synchronized global stationary solution $\mathcal S^\top(\vec{1}_n \otimes \vec{x}^*)$ & $\mathbb R^{nd}$ \\
$\mathbf{J}_{f}, \mathbf{J}_{h}$ & Jacobian matrices of the reaction and coupling functions & $\mathbb R^{d \times d}$ \\
$\mathbf M(\Lambda^{(\alpha)})$ & Mode Jacobian matrix & $\mathbb R^{d\times d}$ \\
$\lambda(\Lambda^{(\alpha)})$ & Dispersion relation (largest real part of eigenvalues of $\mathbf M$) & $\mathbb R$ \\
$\mathbf J_f, \mathbf J_h$ & Jacobian matrices of the reaction and coupling functions (Stuart-Landau) & $\mathbb R^{d \times d}$ \\
$z_j, w_j$ & Complex state and rotated variables (Stuart-Landau / Abstract model) & $\mathbb C$ \\
$\vec \xi_j = (\xi_j, \eta_j)^\top$ & Real and imaginary components of the rotated variables & $\mathbb R^2$ \\
$\mathbf E$ & Selection/mixing coupling matrix (Lorenz system) & $\mathbb R^{3\times 3}$ \\
\hline
\end{tabular}
\caption{\textcolor{black}{\textbf{Summary of key mathematical symbols and notations used throughout the manuscript.}}}
\end{table*}

\section{Results}
\label{sec:results}

The aim of this section is to introduce three systems and examine the conditions under which Turing instability can emerge as predicted by the theory developed above.

\subsection{The Stuart–Landau model}
\label{ssec:SL}

The first system we take into account is the Stuart-Landau (SL) model~\cite{Stuart1978,vanharten,aranson,garcamorales}, a canonical example of
nonlinear oscillators widely used to describe a broad class of phenomena and resulting to be a normal form for
systems close to a supercritical Hopf-bifurcation. For the application we want to describe, we will, however, consider the SL dynamics in the subcritical case, where, i.e., the origin is a stable equilibrium. For a detailed analysis and the explicit computations presented in this section, we refer to Appendix~\ref{app:SL}.

In Cartesian coordinates, a single SL oscillator $j$ can be written as
\begin{align}\label{eq:SL}
\frac{d}{dt}
\begin{pmatrix}
 x_j\\y_j
\end{pmatrix} =&
\begin{pmatrix}
 \sigma_{\Re} & -\sigma_{\Im}\\
 \sigma_{\Im} & \sigma_{\Re}
\end{pmatrix}
\begin{pmatrix}
 x_j\\y_j
\end{pmatrix}\nonumber\\
&-  (x_j^2+y_j^2)
\begin{pmatrix}
 \beta_{\Re} & -\beta_{\Im}\\
 \beta_{\Im} & \beta_{\Re}
\end{pmatrix}
\begin{pmatrix}
 x_j\\y_j
\end{pmatrix}\, ,
\end{align}
where we introduced the complex model parameters $\sigma=\sigma_{\Re}+i\sigma_{\Im}$ and $\beta=\beta_{\Re}+i\beta_{\Im}$. 

Here, we consider $n$ identical SL oscillators anchored to the nodes of a MWN, coupled
via a diffusive-like nonlinear function. The dynamics of the $j$--th unit is thus given by
\begin{widetext}
\begin{align}
\label{eq:SLnonlin}
\frac{d}{dt}
\begin{pmatrix}
 x_j\\y_j
\end{pmatrix} &=
\begin{pmatrix}
 \sigma_{\Re} & -\sigma_{\Im}\\
 \sigma_{\Im} & \sigma_{\Re}
\end{pmatrix}
\begin{pmatrix}
 x_j\\y_j
\end{pmatrix} -  (x_j^2+y_j^2)
\begin{pmatrix}
 \beta_{\Re} & -\beta_{\Im}\\
 \beta_{\Im} & \beta_{\Re}
\end{pmatrix}
\begin{pmatrix}
 x_j\\y_j
\end{pmatrix}-\sum_\ell \mathcal{L}_{j\ell}\left[ (x_\ell^2+y_\ell^2)^{\frac{m-1}{2}}
\begin{pmatrix}
 \mu_{\Re} & -\mu_{\Im}\\
 \mu_{\Im} & \mu_{\Re}
\end{pmatrix}
\begin{pmatrix}
 x_\ell\\y_\ell
\end{pmatrix}\right]\nonumber\\
&=:\vec{f}(x_j,y_j)-\sum_\ell \mathcal{L}_{j\ell}\vec{h}(x_\ell,y_\ell)\, ,
\end{align}
\end{widetext}
where $\vec f(x_j,y_j)$ is the nonlinear function defined by the above equation and $\vec{h}(x_\ell,y_\ell):=(x_\ell^2+y_\ell^2)^{\frac{m-1}{2}}
\begin{pmatrix}
 \mu_{\Re} & -\mu_{\Im}\\
 \mu_{\Im} & \mu_{\Re}
\end{pmatrix}
\begin{pmatrix}
 x_\ell\\y_\ell
\end{pmatrix}$ defines the coupling function with complex coupling strength $\mu=\mu_{\Re}+i\mu_{\Im}$, and $\mathcal L$ is the supra-Laplace matrix of the MWN.

We define the underlying MWN by following the construction provided in Proposition~\ref{prop:coherence} and focusing on the two-dimensional case ($d=2$). More specifically, to each node $i$ we associate an orthonormal matrix $\mathbf R_i \in \mathrm{O}(2)$ and, for every pair of connected nodes $(i,j)$, we construct the corresponding matrix-valued edge weights as $\mathbf W_{ij}=w_{ij}\mathbf R_{ij}$ where $w_{ij}\in\mathbb R$ is a scalar positive weight and the link transformation is defined by $\mathbf R_{ij} = \mathbf R_i^\top\mathbf R_j$. Note that $\mathbf R_{ij}$ is orthonormal as the product of orthonormal matrices.

We can associate with the complex parameters $\sigma$, $\beta$ and $\mu$ three real antisymmetric $2\times 2$ matrices, and one can prove that they commute with any $2\times 2$ orthonormal matrix $\mathbf R$. It thus follows
\begin{align}
    \vec f(\mathbf R\vec x)=\mathbf R\vec f(\vec x),\quad\vec h(\mathbf R\vec x)=\mathbf R\vec h(\vec x),\:\forall\vec x,
\end{align}
and hence the dynamics preserve the coherent structure of the network.

The origin is clearly an equilibrium of~\eqref{eq:SL} and one can prove that it is stable provided $\sigma_{\Re}<0$.

To determine the conditions for the onset of Turing instability, we linearize the coupled system~\eqref{eq:SLnonlin} in the original coordinates $(x_j,y_j)$ about the origin, then we perform the ``rotation'' to new coordinates, $(\xi_j,\eta_j)$, i.e., $(\xi_1,\eta_1,\dots,\xi_n,\eta_n)^\top=\mathcal{S}(x_1,y_1,\dots,x_n,y_n)^\top$, and project the resulting system onto the Laplace eigenbasis
\begin{align}
    \xi_j&=\sum_\alpha\hat{\xi}_\alpha\vec{\phi}^{(\alpha)}_j, \\\eta_j&=\sum_\alpha\hat{\eta}_\alpha\vec{\phi}^{(\alpha)}_j\, .
\end{align}
By exploiting the orthogonality of the eigenbasis, the resulting systems can be rewritten as 
\begin{eqnarray}
    \frac{d}{dt}\begin{pmatrix}
        \hat{\xi}_\alpha \\ \hat{\eta}_\alpha
\end{pmatrix} &=& \left[\begin{pmatrix}
 \sigma_{\Re} & -\sigma_{\Im}\\
 \sigma_{\Im} & \sigma_{\Re}
\end{pmatrix}-\Lambda^{(\alpha)}\begin{pmatrix}
 \mu_{\Re} & -\mu_{\Im}\\
 \mu_{\Im} & \mu_{\Re}
\end{pmatrix}\right]\begin{pmatrix}
        \hat{\xi}_\alpha \\ \hat{\eta}_\alpha
\end{pmatrix}\notag\\
&=:&\mathbf{M}(\Lambda^{(\alpha)})\begin{pmatrix}
        \hat{\xi}_\alpha \\ \hat{\eta}_\alpha
\end{pmatrix}\, .
\end{eqnarray}
The diffusion-driven instability occurs when the real part of at least one eigenvalue of $\mathbf{M}(\Lambda^{(\alpha)})$ becomes positive for some $\alpha>1$. 

A direct computation allows to determine the  eigenvalues of $\mathbf M(\Lambda^{(\alpha)})$, to be $\lambda_\alpha = (\sigma_{\Re}-\Lambda^{(\alpha)}\mu_{\Re})
\pm i(\sigma_{\Im}-\Lambda^{(\alpha)}\mu_{\Im})$ for any $\alpha$.
Since $\sigma_{\Re}<0$ and $\Lambda^{(\alpha)} \geq 0$, we thus have
\begin{equation}
    \label{eq:condTSL}
\Re(\lambda_\alpha)=   \sigma_{\Re}-\Lambda^{(\alpha)}\mu_{\Re}>0\, ,
\end{equation}
if $\mu_{\Re}$ is \textit{sufficiently} negative.
Alternatively, once the model parameters $\sigma_{\Re}$ and $\mu_{\Re}$ are fixed, the instability condition~\eqref{eq:condTSL} holds true provided
\begin{align}\label{eq:condcritic}
\Lambda^{(\alpha)} > \frac{\sigma_{\Re}}{\mu_{\Re}} =: \Lambda_{\text{crit}},
\quad (\text{with }\alpha>1)\, .
\end{align}

Notice that such a condition shows that modes associated to eigenvalues $\Lambda^{(\alpha)} > \Lambda_{\text{crit}}$ 
become unstable, leading to pattern formation, while the uniform mode ($\alpha=1$, $\Lambda^{(1)}=0$) 
remains stable since $\sigma_{\Re}<0$. Finally, by analyzing the radial dynamics of the SL oscillator, we can show that no stable limit cycle exists for $\sigma_{\Re}<0$. Hence, the observed instability cannot be attributed to oscillatory behavior but is purely driven by the network coupling.\\



In Fig.~\ref{fig:SL_SBM} and Fig.~\ref{fig:SL_ER}, we provide numerical results supporting the analytical ones. In the former, the MWN is defined by using as underlying network obtained from a stochastic block model with \textcolor{black}{$n=500$} nodes divided into \textcolor{black}{$K=3$ blocks, the probability for a link to exist among any couple of nodes in the same block is $p_{in}=0.08$, while the probability to have a link among two nodes belonging to different blocks is $p_{out}=0.001$}. In the latter, the underlying topology of the MWN is generated by an Erd\H{o}s--R\'enyi random graph \textcolor{black}{composed by $n=500$ nodes and with a probability $p=0.02$ for each couple of nodes to be connected}. In both cases, top panels refer to the case where the dispersion relation is negative for all $\alpha$, thus the perturbation shrinks to zero and the origin is stable, while bottom panels show the onset of diffusion-driven instabilities due to the presence of unstable eigenvalues, i.e., those that satisfy condition~\eqref{eq:condcritic}. Panels $(a)$ and $(b)$ of Fig.~\ref{fig:SL_SBM} and Fig~\ref{fig:SL_ER} clearly show a linear dispersion relation, note that the blue curve has been drawn only to help the reader to identify the linear trend. 
The instability threshold is crossed precisely when $\Lambda^{(\alpha)} = \Lambda_{\text{crit}} = -\sigma_{\Re}/\mu_{\Re}$, and this determines the emergence of patterns as clearly visible in panels $(d)$ where we display the time evolution of $\xi_j(t)$, i.e., the real part of the complex signal $z_j(t)$ in the ``rotated'' variables.
\begin{figure*}
    \centering
    \includegraphics[width=1\linewidth]{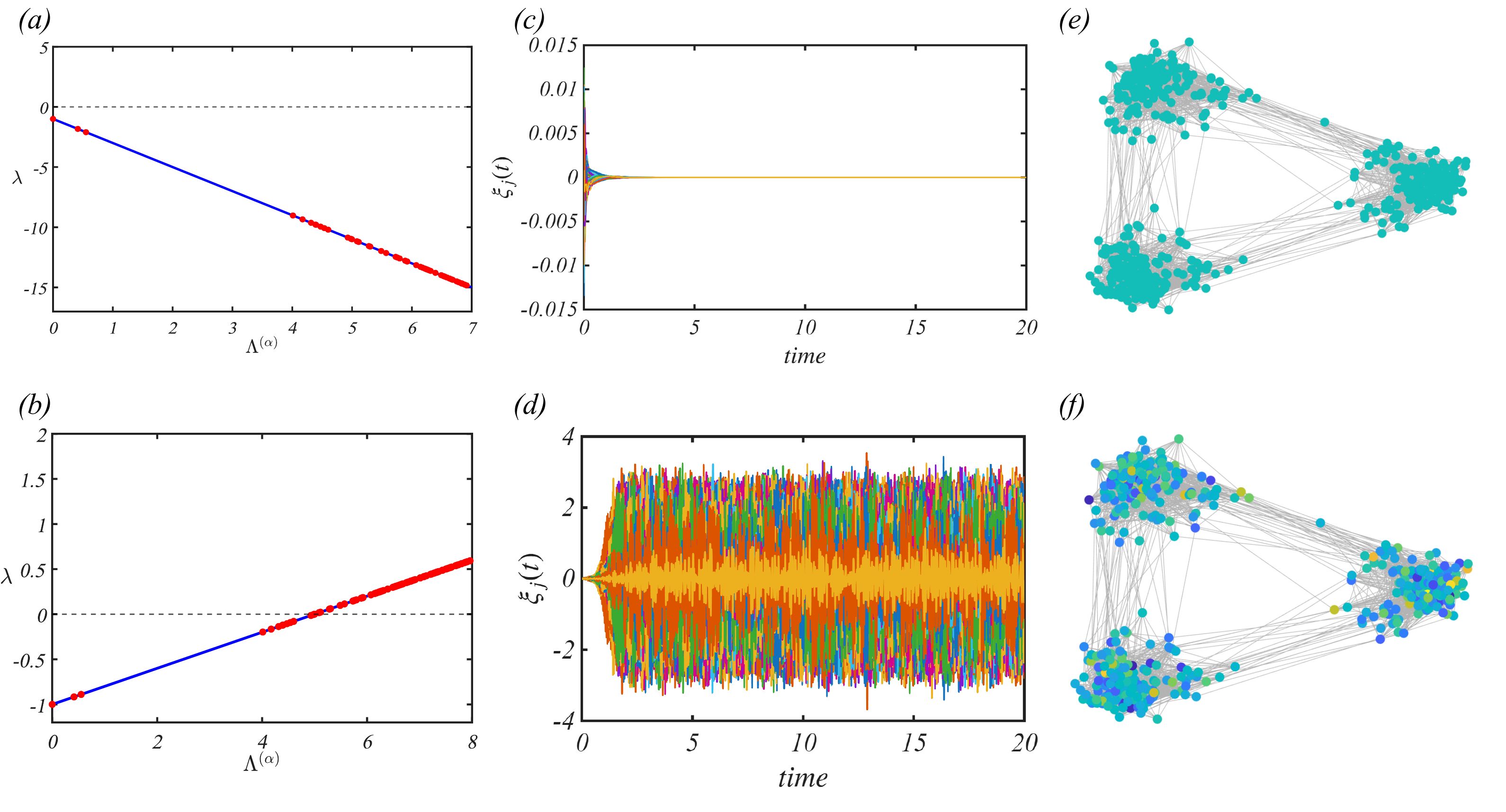}
    \caption{\textbf{Emergence of Turing patterns in a MWN with stochastic block model topology of coupled Stuart-Landau systems.} Panels $(a),(b)$ - The dispersion relation is reported as a function of the network Laplacian eigenvalues $\Lambda^{(\alpha)}$ (red dots), the blue curve is represented to emphasize the linear dependence: $(a)$ stable regime $\lambda(\Lambda^{(\alpha)})<0$ for all $\alpha$; $(b)$ unstable regime, there exist $\Lambda^{(\alpha)}$ associated to a positive dispersion relation, returning thus Turing pattern formation. Panels $(c),(d)$ - Temporal evolution of $\xi_j(t)$ across nodes: $(c)$ convergence to the homogeneous equilibrium of the oscillators in the stable regime; $(d)$ Turing patterns emerge in the unstable regime. Panels $(e),(f)$ - Network visualizations with node colors indicating dynamical states, values of $\xi_j(t)$ after a sufficiently long time: $(e)$ nodes present the same color, meaning that oscillators assume the same value; $(f)$ nodes present different colors indicating that in the unstable regime, Turing patterns emerge, i.e., $\xi_j$ vary from node to node. The model parameters are $\sigma = -1-0.5i$, $\beta = 1+i$, $m=1$,
    $\mu=2 + 5.5i$ for the top panels $(a)$, $(c)$, and $(e)$, while \textcolor{black}{$\mu=-0.2+5.5i$} for the bottom panels $(b)$, $(d)$, and $(f)$. The underlying topology is given by a  stochastic block model of Erd\H{o}s–Rényi networks composed by \textcolor{black}{$n = 500$ nodes and with $p_{in} = 0.08$ and $p_{out}=0.001$, and $K=3$ blocks.}}
    \label{fig:SL_SBM}
\end{figure*}

\begin{figure*}
    \centering
    \includegraphics[width=1\linewidth]{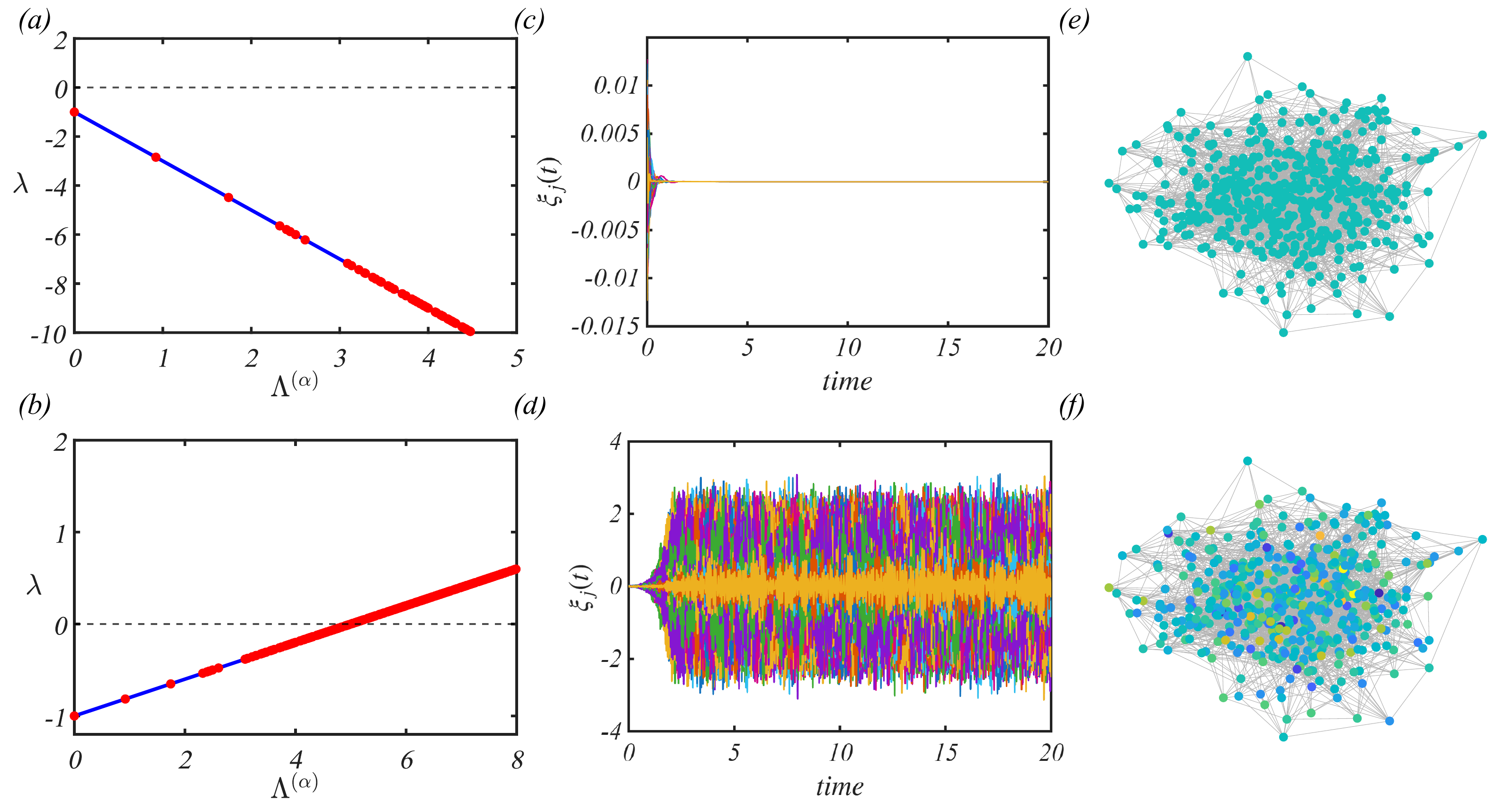}
    \caption{\textbf{Emergence of Turing patterns in a MWN with random Erd\H{o}s–Rényi topology of coupled Stuart-Landau systems.} Panels $(a),(b)$ - Dispersion relations is shown as a function of the network Laplacian eigenvalues $\Lambda^{(\alpha)}$ (red dots), the blue curve is displayed to emphasize the linear dependence: $(a)$ stable regime with a negative dispersion relation for all eigenvalues; $(b)$ unstable regime with dispersion relation assuming positive values for some eigenvalues, Turing pattern can thus develop. Panels $(c),(d)$ - Temporal evolution of $\xi_j(t)$ across nodes: $(c)$ convergence to the homogeneous solution in the stable regime; $(d)$ Turing patterns emerge in the unstable regime. Panels $(e),(f)$ - Network visualizations with node colors indicating dynamical states, i.e., value of $\xi_j(t)$ after a sufficiently long time: $(e)$ nodes present the same color, meaning that oscillators assumed the same value independently from the node index; $(f)$ nodes present different colors indicating that in the unstable regime, Turing patterns emerge, i.e., nodes differentiate among themselves. The model parameters are $\sigma = -1-0.5i$, $\beta = 1+i$, $\mu=2 + 5.5i$, $m=1$ for the top panels $(a)$, $(c)$, and $(e)$, while \textcolor{black}{$\mu=-0.2+5.5i$} for the bottom panels $(b)$, $(d)$, and $(f)$.  The underlying topology is given by a Erd\H{o}s–Rényi network composed by \textcolor{black}{$n=500$ nodes and $p=0.02$}}
    \label{fig:SL_ER}
\end{figure*}

\subsection{Abstract model invariant under rotations by $2\pi/k$}
\label{ssec:abst}

Let us, now, consider an abstract model with a given rotational symmetry to be used to test the emergence of Turing patterns in MWNs. The model is inspired by the Stuart-Landau system, where the cubic nonlinearity has been replaced with a general $(k+1)$--th power term, $k\geq 2$. More precisely, we consider
\begin{align}
\label{eq:modelk}
\frac{dz_j}{dt}=\sigma z_j+\beta z_j^{k+1}-\varepsilon \sum_\ell \mathcal{L}_{j\ell}z_\ell\, ,
\end{align}
where $z_j\in\mathbb{C}$ describes the state of the $j$--th node, $j=1,\dots,n$; $\sigma$, $\beta$ and $\varepsilon$ are complex parameters, $k\in\mathbb{N}$ determines the nonlinearity of the reaction part and $\mathcal{L}$ is the supra-Laplace matrix of the MWN. 

Once we silence the interaction via the MWN, we obtain for all $j=1,\dots,n$ the system
\begin{align}
\label{eq:modelkiso}
\frac{dz_j}{dt}=\sigma z_j+\beta z_j^{k+1}\equiv f(z_j)\, ,
\end{align}
that admits the trivial equilibrium $z^{(0)}=0$ and the $k$ roots of the equation $z^{k}=-\sigma /\beta$
\begin{align}
\label{eq:roots}
\hat{z}^{(s)} = \left(\frac{|\sigma|}{|\beta|}\right)^{\frac{1}{k}}e^{i\frac{\arg \sigma - \arg \beta}{k}}e^{i\pi \frac{1+2s}{k}}\quad \forall s=1,\dots k\, ,
\end{align}
where we introduced $\sigma=|\sigma|e^{i\arg \sigma}$ and similarly for $\beta$. The stability of such equilibria can be determined by linearizing the system about the equilibrium, namely, to compute the derivative of $f(z)$ at the equilibrium we are interested in, and to impose its real part to be negative:
\begin{align}
\label{eq:fprime}
f'(z^{(s)})=\sigma +(k+1)\beta (z^{(s)})^{k}=-\sigma k\, ,
\end{align}
where we used the definition of $z^{(s)}$, hence
\begin{align}
\label{eq:refprime}
\Re \left[f'(z^{(s)})\right]= -k\sigma_{\Re}\, ,
\end{align}
and being $k\geq 3$, stability occurs if $\sigma_{\Re}>0$, condition that we hereby assume to hold true.

Let us observe that Eq.~\eqref{eq:modelkiso} is invariant by rotation of a angle $2\pi/k$, indeed if we replace $z_j$ by $z_j e^{i 2\pi/k}$ then we obtain
\begin{align}
\label{eq:modelkisoinv}
e^{-i 2\pi/k}\frac{dz_j}{dt}=\sigma e^{-i 2\pi/k} z_j +\beta  e^{-i 2\pi (k+1)/k}(z_j)^{k+1}\, ,
\end{align}
from which we can conclude
\begin{align}
\label{eq:modelkisoinv2}
\frac{dz_j}{dt}=\sigma z_j +\beta z_j^{k+1}\, ,
\end{align}
namely the original system.

Fix $s\in\{1,\dots,k\}$. We are now interested in studying the stability of the solution $z_j=z^{(s)}$ for all $j=1,\dots,n$, for the coupled system~\eqref{eq:modelk}. We then introduce $z_j=z^{(s)}+u_j$, where $u_j\in\mathbb{C}$ is a small perturbation, and we perform a first-order expansion of Eq.~\eqref{eq:modelk}:

\begin{align}
\frac{du_j}{dt}&=\sigma u_j+\beta (k+1) (z^{(s)})^{k}u_j-\varepsilon \sum_\ell \mathcal{L}_{j\ell}u_\ell\nonumber\\
&= -k\sigma u_j-\varepsilon \sum_\ell \mathcal{L}_{j\ell}u_\ell\, .
\label{eq:modelklin}
\end{align}
By resorting again to the ``rotated'' variables, $\vec{w}=\mathcal{S}\vec{u}$, and by projecting them onto the supra-Laplace eigenbasis, we obtain:
\begin{align}
\label{eq:modelklinproj}
\frac{d\hat{w}_\alpha}{dt}= \left[-k\sigma-\varepsilon \Lambda^{(\alpha)}\right]\hat{w}_\alpha\, ;
\end{align}
the equilibrium solution $z_j=z^{(s)}$ is unstable if there exists $\alpha \geq 2$ such that 
\begin{align}
\label{eq:reldisp}
-k\sigma_{\Re}-\varepsilon_{\Re} \Lambda^{(\alpha)}>0\, ,
\end{align}
namely 
\begin{align}
\label{eq:reldisp2}
\varepsilon_{\Re} < -k\frac{\sigma_{\Re}}{\Lambda^{(\alpha)}}\, .
\end{align}

In Fig.~\ref{fig:AbstractMWN}, we report two cases supporting the analytical findings. \textcolor{black}{For the sake of clarity, note that, to ground the visual representation of our numerical patterns, we introduce the variable $\zeta_j(t)\in \mathbb{C}$, which corresponds to the components of the transformed vector $\vec{w}$ in the complex domain. Specifically, its real part, $\Re(\zeta_j(t))$, captures the dynamic evolution of the system by separating the dynamical changes from the static network rotations.} In the former one, the coupling parameter $\varepsilon$ does not satisfy condition~\eqref{eq:reldisp2} for any $\alpha$ and thus patterns cannot emerge (see top panels). On the other hand, in the second case, there exist several $\alpha$ such that condition~\eqref{eq:reldisp2} holds true for the chosen $\varepsilon$, and Turing patterns can emerge (see bottom panels). In both cases, the MWN is built by using a Barab\'ari-Albert network composed by \textcolor{black}{$n=500$} nodes where at each step a single link and a single node are added. The MWN is coherent because the matrix weights have been built according to Proposition~\ref{prop:coherence}, where the matrices $\mathbf{Q}_j$ are rotations by $2\pi/k$ with probability $q$ or the identity matrix with probability $1-q$; for this example we chosen $q=1/2$. 
\textcolor{black}{To conclude, let us briefly motivate such a choice. Note that the extreme cases $q=0$ and $q=1$ lead to structurally uniform configurations: in the first case all node-dependent matrices $\mathbf{Q}_j$ coincide with the identity $\mathbf{I}_d$, while in the second one each node shares the exact same rotation matrix $\mathbf{Q}_j = \mathbf{R}_{2\pi/k}$, implying that for any connected pair $(i,j)$ the link transformation yields $\mathbf{R}_{ij} = \mathbf{Q}_i^\top \mathbf{Q}_j = \mathbf{R}_{2\pi/k}^\top \mathbf{R}_{2\pi/k} = \mathbf{I}_d$, effectively canceling out any directional effect along the edges. It follows that the choice $q=1/2$ leads to maximize the structural disorder and the spatial mixing of coordinates across the network.}

\textcolor{black}{In addition, it is worth noting that, due to the coherence assumption, the system is invariant under changes in $q$. Indeed, since the transformation $\mathcal S$ is exact, the spectrum of the supra-Laplacian remains identical to that of the baseline scalar Laplacian (scaled by the identity block) and, therefore, no phase transition should occur as $q$ varies from $0$ to $1$, and the instability threshold remains constant.}

\begin{figure*}
    \centering
    \includegraphics[width=1\linewidth]{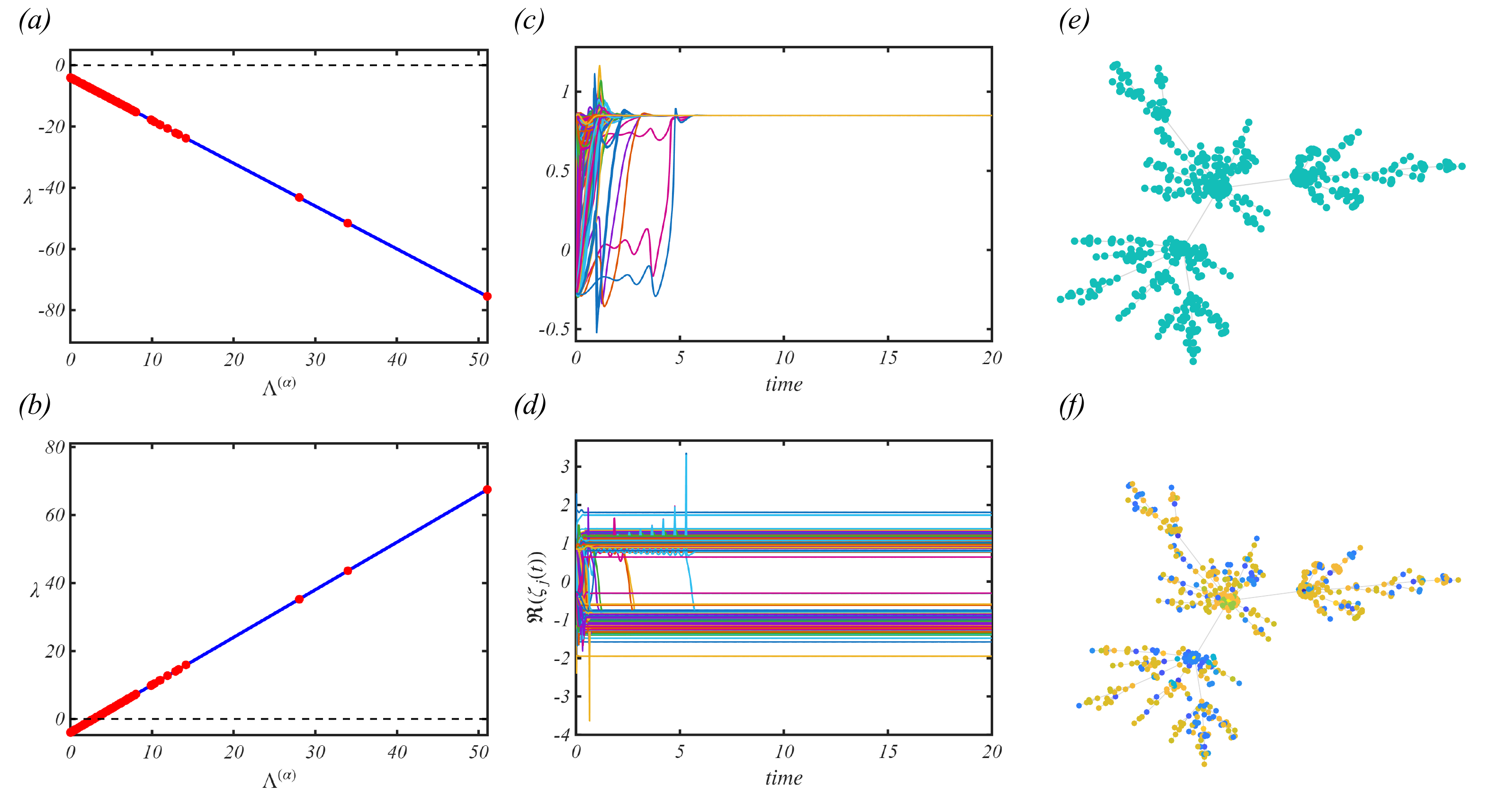}
    \caption{\textbf{Emergence of Turing patterns for the abstract model~\eqref{eq:modelk} defined on top of a MWN whose underlying topology is given by a Barab\'asi-Albert network composed by \textcolor{black}{$n=500$} nodes.} Panels $(a),(b)$ - Dispersion relations as a function of the Laplace eigenvalues $\Lambda^{(\alpha)}$ (red dots), the blue curve has been drawn to help the reader and can be obtained by replacing the eigenvalues with a continuous variable. In panel $(a)$, the dispersion relation remains negative for any value of $\Lambda^{(\alpha)}$, which prevents the emergence of Turing patterns, as can be observed in panel $(c)$, where we report the time evolution of $\Re(\zeta_j(t))$, and in panel $(e)$ where we show the network whose nodes are colored according to the asymptotic stationary values of $\Re(\zeta_j(t))$. In panel $(b)$, we report the dispersion relations and we can observe that it assumes positive values (red dots) for some $\Lambda^{(\alpha)}\gtrsim 3$, Turing patterns can thus emerge as visible in panel $(d)$, where we report $\Re(\zeta_j(t))$ versus time and panels $(f)$, where the network is displayed with the nodes again colored according to the asymptotic stationary values of $\Re(\zeta_j(t))$. The model parameters are $k=4$, $\sigma = 1+i$, $\beta = 1-2i$, $\varepsilon=1.4+2i$ for the top panels $(a)$, $(c)$, and $(d)$, while $\varepsilon=-1.4+2i$ for the bottom panels $(b)$, $(d)$, and $(f)$.}
    \label{fig:AbstractMWN}
\end{figure*}

Finally, let us observe that the results of Fig.~\ref{fig:AbstractMWN} have been obtained by using the variables ``rotated'' by the matrix $\mathcal{S}$, i.e., $\xi_j=\Re(w_j)$, and this choice is mandatory; indeed by looking at the results in the original variables can be misleading: one can observe a heterogeneous solution even once the dispersion relation is negative (see Fig.~\ref{fig:AbstractMWNwrongvar}), this is because of the mixing property induced by the transformations $\mathbf{R}_{ij}$ that is removed once we use the rotated variables.

\begin{figure*}
    \centering
\includegraphics[width=0.6\linewidth]{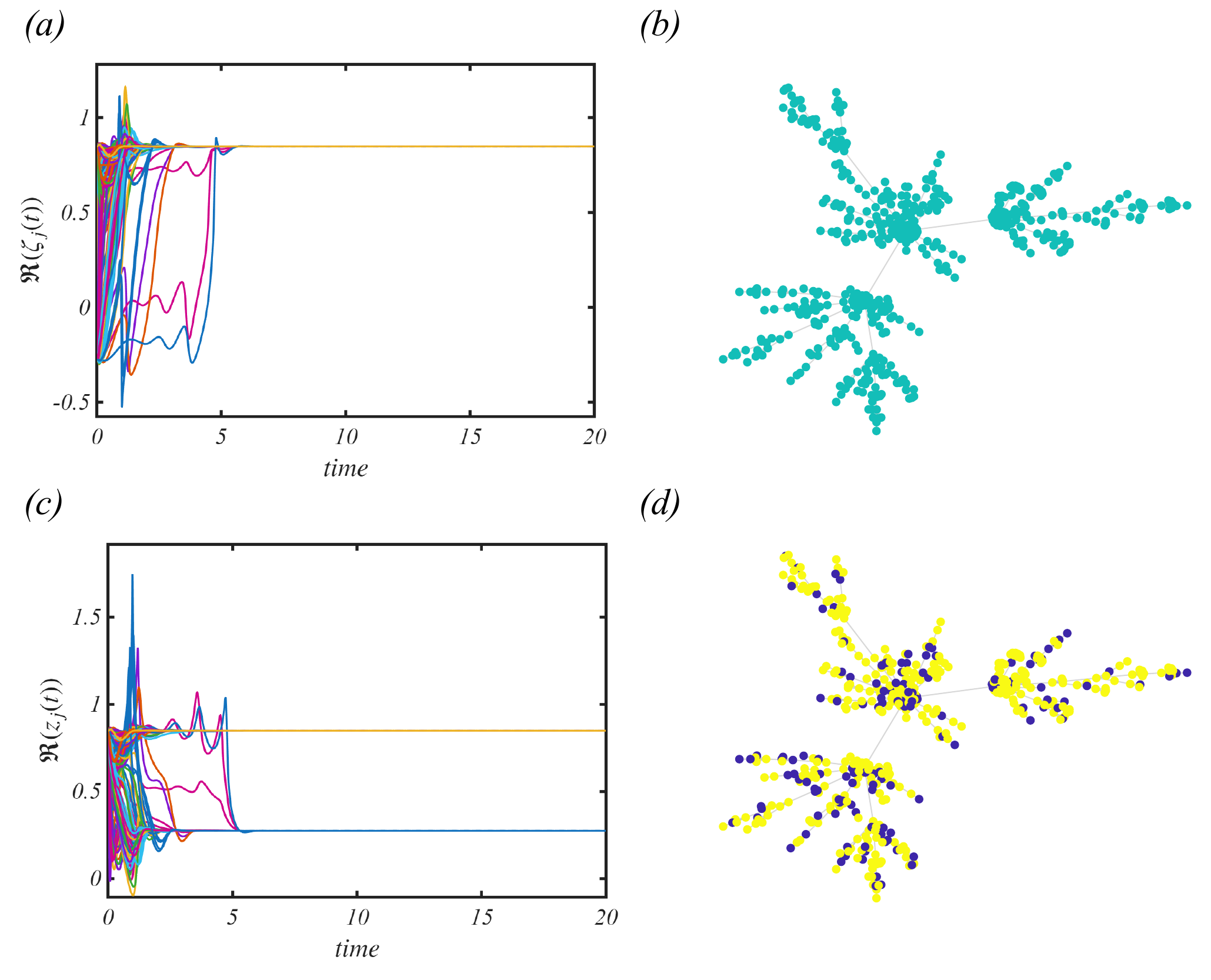}
    \caption{\textbf{On the use of the rotated variables and the original ones.} We consider the  model~\eqref{eq:modelk} with $k=4$ defined on top of a MWN whose underlying topology is given by a Barab\'asi-Albert network composed of \textcolor{black}{$n=500$} nodes. The parameters have been fixed so as to have a negative dispersion relation, $\sigma = 1+i$, $\beta = 1-2i$, and $\varepsilon=1.4+2i$. Hence, patterns cannot emerge as clearly shown in the top panels: (a) the time evolution of $\Re(\zeta_j(t))$, and (b) the network with nodes colored according to the asymptotic values of $\Re(\zeta_j(t))$. In the bottom panels, we show the same results but by using the original variables, (c) the time evolution of $\Re(z_j(t))$, and (d) the network with nodes colored according to the asymptotic values of $\Re(z_j(t))$. 
    \textcolor{black}{When observed in the original variables, the system shows an apparent spatial heterogeneity even before any instability occurs. This misleading impression of pattern formation is purely a geometric artifact, and such a phenomenon arises because the matrix weights introduce localized coordinate rotations along the network edges. Consequently, even when the system lies in a perfectly uniform steady state, these spatial rotations mix the local state variables differently at each node. This geometric artifact disappears once the proper global change of variables through $\mathcal S$ is applied.}}
    \label{fig:AbstractMWNwrongvar}
\end{figure*}

\subsection{The Lorenz model}
\label{ssec:Lorenz}

In the previous sections, we have considered one-dimensional complex systems (two-dimensional ones once we introduce real variables). The aim of this section is to provide an example in three dimensions. We hence consider a system of Lorenz oscillators, which provides a canonical example of a three-dimensional nonlinear dynamical system exhibiting rich bifurcation scenarios. Unlike the two-dimensional Stuart-Landau system, the Lorenz system requires a three-dimensional state space and possesses multiple equilibria. 

We now consider $n$ copies of the Lorenz system coupled through a MWN, described by 
\begin{align}
    \label{eq:lorenz_diffusive}
    \frac{d\vec{x}_j}{dt}=\vec{f}(\vec{x}_j)-\varepsilon \sum_\ell \mathcal{L}_{j\ell}\mathbf{E}\vec{x}_\ell\, ,
\end{align}
where $\vec x_j=(x_j,y_j,z_j)^\top$ represents the state of oscillator $j$, $\vec f$ is the Lorenz vector field, namely
\begin{align}
\label{eq:Lorenzf}
    \vec{f}(\vec{x}_j)=\left(\begin{matrix}
        \sigma(y_j-x_j)\\x_j(\rho-z_j)-y_j\\x_jy_j-\beta z_j
    \end{matrix}\right),
\end{align}
$\varepsilon>0$ is the coupling strength, $\mathcal L$ is the supra-Laplace matrix, and $\mathbf E\in\mathbb R^{3\times 3}$ describes how the dynamical variables are coupled. Note that according to Turing's theory, the equilibrium under consideration must be stable. For $\rho>1$, the uncoupled Lorenz system admits three equilibria: the origin and two symmetric non-trivial states
\begin{align}\label{eq:eqLorenz}
\vec{x}_{\pm}^*
    =
    \begin{pmatrix}
        \pm\sqrt{\beta(\rho-1)} \\
        \pm\sqrt{\beta(\rho-1)} \\
        \rho-1
    \end{pmatrix}\, .
\end{align}
In the following, we will consider $\vec{x}^*=\vec{x}^*_+$,
that results stable if $\sigma > \beta +1$ and $1<\rho<\rho_H := \frac{\sigma(\sigma + \beta + 3)}{\sigma - \beta - 1}$.

Following Proposition~\ref{prop:coherence}, we construct a coherent MWN, in such a way that the three-dimensional Lorenz system is invariant with respect to the link transformations, i.e., the rotations. To achieve this goal, we associate with each node $i=1,\dots,n$ an orthogonal matrix $\mathbf R_i\in O(3)$. By exploiting the symmetry properties of the Lorenz dynamics, we choose these matrices to have a block-diagonal structure: a $2\times2$ orthogonal block acting on the $(x,y)$--subspace and a unit entry in the $(3,3)$ position.

Each matrix $\mathbf{R}_i$ is randomly selected among two matrices which leave the function $\vec{f}$ unchanged, namely, the identity matrix $\mathbf{I}_3$ and the reflection matrix $\mathbf{R}_{\pi} =\left(\begin{smallmatrix}
    -1 & 0 &0\\0 & -1 & 0\\ 0 & 0 & 1
\end{smallmatrix}\right)$, both occurring with equal probability $q=1/2$. The key property is that $\mathbf R_\pi\vec f(\mathbf R_\pi\vec x)=\vec f(\vec x)$, $\forall\:\vec x$, ensuring invariance of the dynamics. For every pair of connected nodes $(i,j)$, we then construct matrix-valued edge weights so that the interaction between nodes $i$ and $j$ is weighted by the relative orthogonal transformation $\mathbf{R}_{ij} = \mathbf{R}_i^\top \mathbf{R}_j$.
These matrix-valued weights are embedded into a supra-adjacency matrix $\mathcal{W} \in \mathbb{R}^{nd \times nd}$, where each nonzero block corresponds to the matrix $\mathbf{R}_{ij}$, the amplitudes $w_{ij}$, all reduced to unity for simplicity's sake. 


Unlike the Stuart-Landau case, for the Lorenz system, the emergence of Turing instability relies on the study of the roots of a third-order polynomial. We could have used the explicit formula for the latter; however, we preferred to resort to the Routh-Hurwitz criterion (see Remark~\ref{rem:criterion}) and to numerical computations to analyze the dispersion relation.

Let us linearize Eq.~\eqref{eq:lorenz_diffusive} around its heterogeneous solution $\vec{X}^*=\mathcal{S}^\top(\vec{1}_n\otimes \vec{x}^*)$, where $\vec{x}^*=\vec{x}^*_+$ is given in Eq.~\eqref{eq:eqLorenz}, and by denoting with $\delta \vec{x}_j = \vec{x}_j - \vec{X}^*$ the perturbation of node $j$. By performing a first-order expansion, we get
\begin{align}
\frac{d\delta \vec{x}_j}{dt} = \mathbf{J}_f (\mathbf{O}_{1j}^\top \vec{x}^*) \delta \vec{x}_j - \varepsilon \sum_\ell \mathcal{L}_{j\ell} \mathbf{J}_h (\mathbf{O}_{1\ell}^\top \vec{x}^*) \delta \vec{x}_\ell 
\end{align}
where $\delta \vec{x} = (\delta x^\top, \delta y^\top, \delta z^\top)^\top$ and $\mathbf{J}_f (\vec{x}^*), \mathbf{J}_h (\vec{x}^*)$ are respectively the Jacobian of $\vec{f}$ and $\vec{h}(\vec{x})\equiv \mathbf{E}\vec{x}$ evaluated on the solution $\vec{x}^*$. By exploiting the invariance of the functions $\vec{f}$ and $\mathbf{E}\vec{x}$, we can introduce the new ``rotated'' variables $\delta \vec{w}_j = \mathbf{O}_{1j} \delta \vec{x}_j$ and obtain
\begin{align}
 \label{eq:linearequationMWnetwork}   
\frac{d\delta \vec{w}_j}{dt}  = \mathbf{J}_f (\vec{x}^*) \delta \vec{w}_j - \varepsilon\sum_\ell \bar{L}_{j\ell} \mathbf{E} \delta \vec{w}_\ell ,
\end{align}

Let us now project the perturbation $\delta \vec{w}_j$ onto the eigenbasis of the Laplace matrix $\bar{\mathbf{L}}$ i.e., $\delta \vec{w}_j = \sum_\alpha\hat{w}_\alpha\phi_j^{(\alpha)}$.  By inserting the latter into \eqref{eq:linearequationMWnetwork}, one gets 
\begin{align}
    \frac{d{\hat{w}}_\alpha}{dt} = \left(\mathbf J_f - \varepsilon\Lambda^{(\alpha)}\mathbf E\right)\hat{w}_\alpha,
\end{align}
where 
\begin{align}
    \mathbf J_f =
    \begin{pmatrix}
        -\sigma & \sigma & 0 \\
        \rho-z^* & -1 & -x^* \\
        y^* & x^* & -\beta
    \end{pmatrix}\, .
\end{align}


It follows that the stability of the equilibrium $\vec{X}^*$ can therefore be deduced by computing the eigenvalues of the matrices
\begin{align}
\label{eq:MlambdaLor}
\mathbf M(\Lambda^{(\alpha)}) =
\mathbf J_f - \varepsilon \Lambda^{(\alpha)} \mathbf{E}\, .
\end{align}
In particular, a diffusion-driven instability occurs when the real part of at least one eigenvalue of $\mathbf M(\Lambda^{(\alpha)})$ becomes positive for some $\alpha>1$. The characteristic equation associated with the latter matrix results to be a third-order polynomial that moreover depends on several model parameters. So instead of using the explicit formula for the third-order roots, we preferred to use a somewhat weaker result, but sufficient for our goal, which allows us to determine the sign of the root, namely the Routh-Hurwitz criterion.
\begin{remark}[Routh--Hurwitz criterion for cubic polynomials]\label{rem:criterion}
Consider a monic cubic polynomial
\begin{align}
p(\lambda)=\lambda^3+a_2\lambda^2+a_1\lambda+a_0,
\qquad a_i\in\mathbb{R}.
\end{align}
All roots of $p$ have strictly negative real part if and only if the following
Routh--Hurwitz conditions are satisfied:
\begin{align}
\label{eq:RHcubic}
a_2>0,\qquad a_1>0,\qquad a_0>0,\qquad a_2a_1>a_0.
\end{align}
\end{remark}

As a first application, we will hereby show that if the coupling matrix $\mathbf{E}$ has all zero elements but the diagonal ones, then Turing patterns cannot emerge. Indeed, in this case, we have
\begin{align}
\mathbf M(\Lambda^{(\alpha)}) =
\mathbf J_f - \varepsilon \Lambda^{(\alpha)} \left(\begin{matrix}
    e_1 & 0 & 0\\
    0 & e_2 & 0\\
    0 & 0 & e_3
\end{matrix}\right)\, ,
\end{align}
with $e_j\in\{0,1\}$. Hence the eigenvalues $\mu_j(\Lambda^{(\alpha)})$, $j=1,2,3$, of $\mathbf{M}(\Lambda^{(\alpha)})$ are given by
\begin{equation}
    \label{eq:eigMdiagE}
\mu_j(\Lambda^{(\alpha)}) = \mu_j(0) -\varepsilon \Lambda^{(\alpha)} e_j\, ,
\end{equation}
where $\mu_j(0)$ are the eigenvalues of $\mathbf{J}_f$. By assumption, the equilibrium $\vec{x}^*$ is stable, thus the latter has a negative real part and so do $\mu_j(\Lambda^{(\alpha)})$ for all $\alpha$ and $j=1,2,3$.\\

In Fig.~\ref{fig:lorenzER} we report numerical results supporting this finding. We consider a set of \textcolor{black}{$n=500$} Lorenz systems coupled through a random Erd\H{o}s-Rényi MWN, with a probability of occurrence of a link between two nodes $i$ and $j$ equal to $p = 0.15$. The model parameters have been set to $\sigma = 13$, $\beta = 8$, and $\rho = 28$, ensuring the stability of the equilibrium point $\vec{x}^*$ once the coupling is silenced. Top panels of Fig.~\ref{fig:lorenzER} refer to the linear coupling ${E}_{11} = 1$ and $\varepsilon=4$. As predicted by the analysis presented above, the dispersion relation is always negative (see panel $(a)$) and thus Turing instability does not hold, i.e., the equilibrium $\vec{x}_j=\vec{x}^*$ for all $j=1,\dots,n$, remains stable also in presence of the coupling (see panel $(c)$, where we plot the variable $\xi_j(t)$, i.e., the first component of the vector $\delta\vec{w}_j$) and state variables converge to the same value for all the nodes (see panel $(e)$). 
\begin{figure*}
    \centering
    \includegraphics[width=1\linewidth]{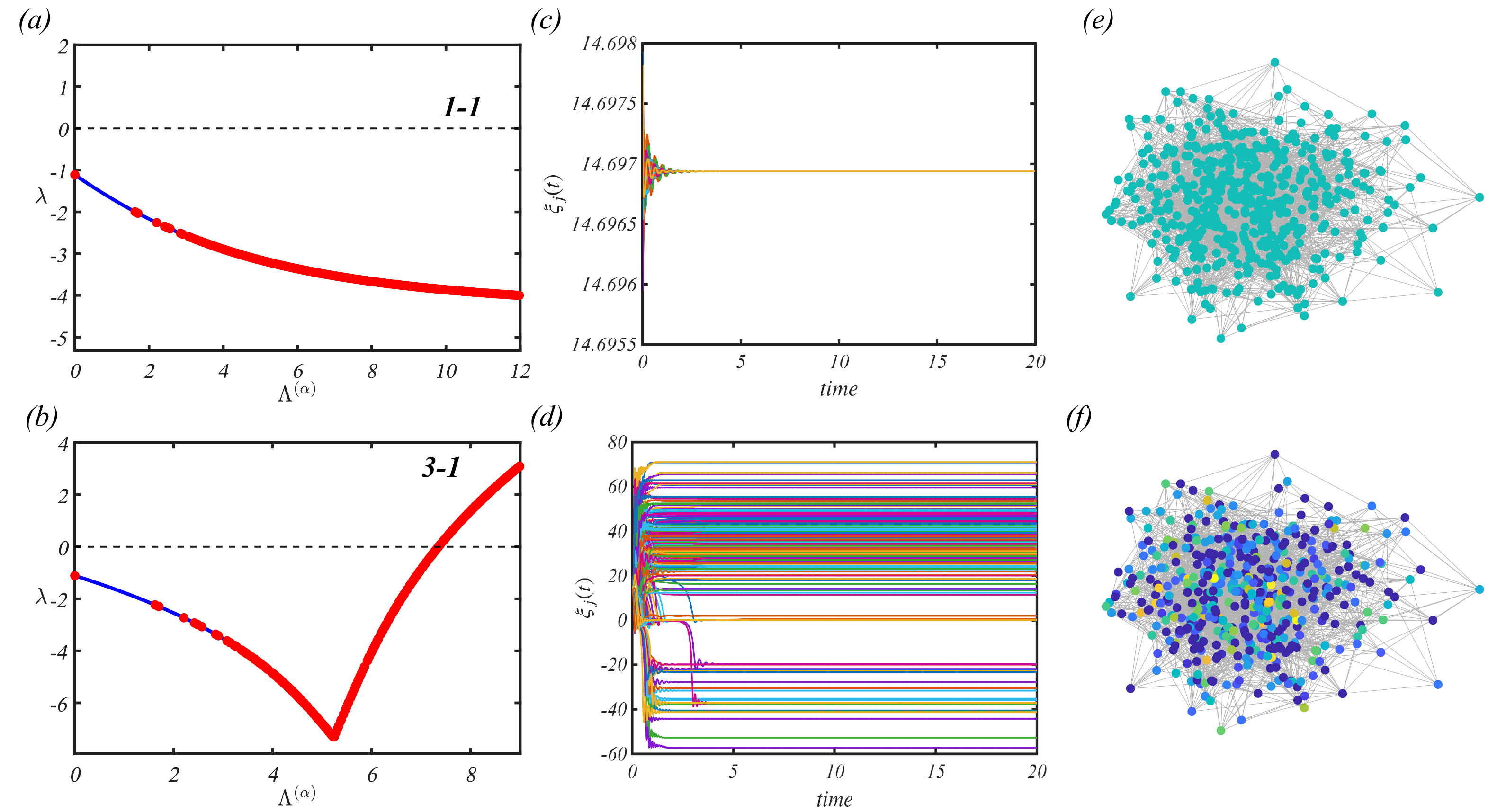}
    \caption{\textbf{Emergence of Turing patterns in a MWN with random Erd\H{o}s–Rényi topology of coupled Lorenz systems.} Panels $(a),(b)$ - The dispersion relations is displayed as a function of network Laplacian eigenvalue $\Lambda^{(\alpha)}$ (reds dots), the blue curve have been obtained by replacing the latter with a continuous variable and it is shown to help the reader to appreciate the nonlinear behavior: $(a)$ the dispersion relation remains negative for any value of $\Lambda^{(\alpha)}$, which prevents the emergence of Turing patterns; $(b)$ negative dispersion relationship up to a critical threshold of $\Lambda^{(\alpha)}$, beyond which it becomes positive enabling the emergence of Turing patterns. Panels $(c),(d)$ - Temporal evolution of $\xi_j(t)$ across nodes: $(c)$ all variables $\xi_j(t)$ converge to the equilibrium point $\xi^*$; $(d)$ emergence of Turing patterns. Panels $(e),(f)$ - Network visualizations with node colors indicating dynamical states, i.e., $\xi_j(t)$ after a sufficiently long period of time: $(e)$ nodes present same color meaning that oscillators reached the same value regardless of the node index; $(f)$ nodes present different colors indicating the onset of Turing patterns. The model parameters used to obtain the presented results are $\sigma = 13$, $\rho = 28$, $\beta=8$.  The underlying topology is given by a Erd\H{o}s–Rényi network composed by \textcolor{black}{$n = 500$ nodes and $p=0.02.$}, and the coupling is obtained with $\varepsilon=4$ and $E_{11}=1$ (top panels) while $E_{31}=1$ (bottom panels).}
    \label{fig:lorenzER}
\end{figure*}

Let us, now, consider a second case where the coupling mixes two different variables; for the sake of definiteness, we here consider ${E}_{31}=1$ and the remaining entries of $\mathbf{E}$ do vanish (the interested reader could find in Appendix~\ref{app:lorenz} the analysis for all the remaining cases). The characteristic polynomial of $\mathbf{M}(\Lambda^{(\alpha)})$ Eq.~\eqref{eq:MlambdaLor}, has coefficients
\begin{align}
\begin{aligned}
a_2 &= \sigma+\beta+1,\\
a_1 &= \beta(\rho+\sigma),\\
a_0 &= 2\sigma\beta(\rho-1)-\gamma\sigma\sqrt{\beta(\rho-1)},
\end{aligned}
\end{align}
where we introduced $\gamma=\varepsilon \Lambda^{(\alpha)}$ to lighten the notations. The first two conditions~\eqref{eq:RHcubic} are trivially satisfied because of the positivity of the parameters. The necessary conditions for the emergence of Turing patterns are thus $a_0<0$ or $a_2a_1-a_0<0$. The former one is equivalent to
\begin{equation}
    \label{eq:RHa3}
    2\beta(\rho-1)<\gamma\sqrt{\beta(\rho-1)}\, ,
\end{equation}
that requires (see Appendix~\ref{app:lorenz} for the detailed computation of $\gamma^{(crit)}_{31}$)
\begin{equation}
    \label{eq:RHa3bis}
    \varepsilon \Lambda^{(\alpha)} > 2\sqrt{\beta(\rho-1)}=\gamma^{(crit)}_{31}\, .
\end{equation}
Hence, if $\rho\ge 1$, one can obtain $a_0<0$ if $\Lambda^{(\alpha)}$ is sufficiently large. 

Let us now consider the remaining case $a_2a_1-a_0<0$. A straightforward computation returns
\begin{equation}
    \gamma\sigma\sqrt{\beta(\rho-1)}-2\beta\sigma(\rho-1)+\beta(\rho+\sigma)(\beta+\sigma+1)<0 \, ,
\end{equation}
which enables us to obtain
\begin{equation}
    \label{eq:cond2Lor}
    \varepsilon\Lambda^{(\alpha)}<2\sqrt{\beta(\rho-1)}-\frac{\beta(\rho+\sigma)(\beta+\sigma+1)}{\sigma\sqrt{\beta(\rho-1)}}=\gamma_2\, .
\end{equation}
Let us however observe (see Appendix~\ref{app:lorenz}) that $\gamma_2<0$ and thus condition~\eqref{eq:cond2Lor} is never verified, being $\Lambda^{(\alpha)}\geq 0$. In conclusion Turing instability arises if condition~\eqref{eq:RHa3bis} holds true.


In the bottom panels of Fig.~\ref{fig:lorenzER}, we show numerical results confirming the analytical ones. We consider again \textcolor{black}{$n=500$} Lorenz systems coupled through a random Erd\H{o}s-Rényi MWN, with a probability of occurrence of a link between any two nodes $i$ and $j$ equal to $p = 0.15$. The homogeneous equilibrium point $\vec{x}^*$ is stable, being $\varepsilon=4$, $\sigma = 13$, $\beta = 8$, and $\rho = 28$. The coupling is realized by assuming, $E_{31}=1$ and the remaining entries $E_{ij}=0$. One can observe the existence of large enough eigenvalues, $\Lambda^{(\alpha)}\gtrsim 7.35=\gamma^{(crit)}_{31}/\varepsilon$, for which the dispersion relation is positive (see red dots in panel $(b)$) and thus patterns do emerge (see panels $(d)$ and $(f)$ where we report, respectively, the time evolution of $\xi_j(t)$ and the value of the same variable after a sufficiently long time period).
\begin{figure*}
    \centering
    \includegraphics[width=1\linewidth]{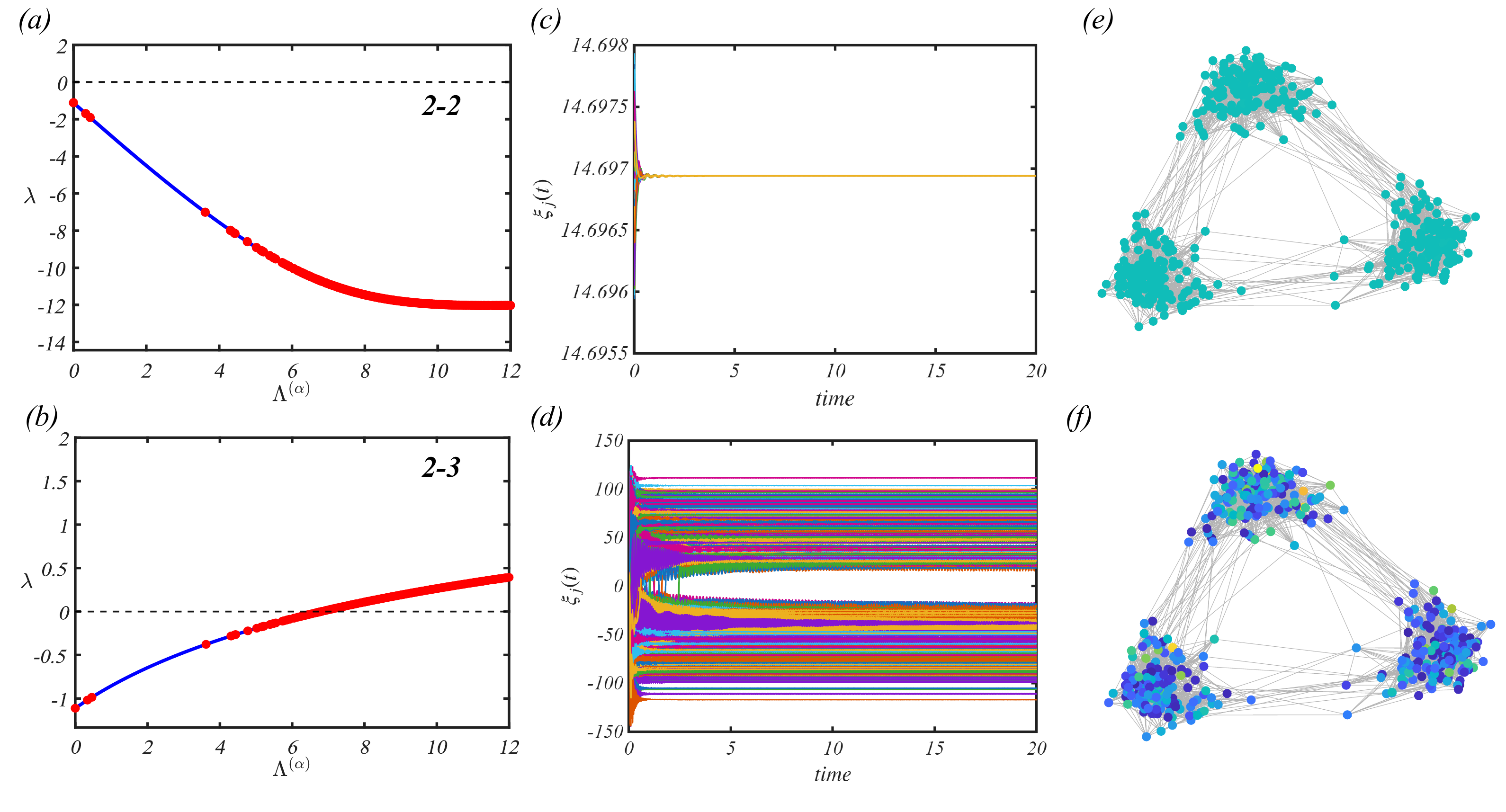}
    \caption{\textbf{Emergence of Turing patterns in a MWN with stochastic block model topology of coupled Lorenz systems.} Panels $(a),(b)$ - The dispersion relations is displayed as a function of network Laplacian eigenvalue $\Lambda^{(\alpha)}$ (reds dots), the blue curve have been obtained by replacing the latter with a continuous variable and it is shown to help the reader to appreciate the nonlinear behavior: $(a)$ the dispersion relation remains negative for any value of $\Lambda^{(\alpha)}$, which prevents the emergence of Turing patterns; $(b)$ negative dispersion relationship up to a critical threshold of $\Lambda^{(\alpha)}$, beyond which it becomes positive enabling the emergence of Turing patterns. Panels $(c),(d)$ - Temporal evolution of $\xi_j(t)$ across nodes: $(c)$ all variables $\xi_j(t)$ converge to the equilibrium point $\xi^*$; $(d)$ emergence of Turing patterns. Panels $(e),(f)$ - Network visualizations with node colors indicating dynamical states, i.e., $\xi_j(t)$ after a sufficiently long period of time: $(e)$ nodes present same color meaning that oscillators assume the same value; $(f)$ nodes present different colors indicating the onset of Turing patterns. The model parameters used to compute the dynamics are $\sigma = 13$, $\rho = 28$, $\beta=8$. The underlying topology is given by a  stochastic block model of Erd\H{o}s–Rényi networks composed by \textcolor{black}{$n = 500$ nodes and with $p_{in} = 0.08$ and $p_{out}=0.001$ and $K=3$ blocks}. The coupling is obtained with $\varepsilon=4$, $E_{22}=1$ (top panels) and $E_{23}=1$ bottom panels).}
    \label{fig:lorenzER2}
\end{figure*}

Another numerical example supporting the analytical findings is shown in Fig.~\ref{fig:lorenzER2}. 
In this case, the underlying structures are generated from a stochastic block model topology  
composed of \textcolor{black}{$n=500$} nodes divided into \textcolor{black}{$K=3$ blocks}. The probability of forming a link 
between two nodes within the same block is set to $p_{\text{in}} = 0.8$, while the 
probability of forming a link between nodes belonging to different blocks is 
$p_{\text{out}} = 0.09$. The model parameters used in this example correspond to $\sigma = 13$, $\beta = 8$, and $\rho = 28$. Top panels of Fig.~\ref{fig:lorenzER2} refer to the linear coupling $E_{22} = 1$ and $\varepsilon=4$. Once again, the dispersion relation is always negative (see panel $(a)$) and Turing patterns cannot emerge since the equilibrium $\vec{x}_j=\vec{x}^*$ for all $j=1,\dots,n$, remains stable in presence of the coupling (see panel $(c)$ and $(e)$). In the bottom panels, Numerical results are presented for $E_{23}=1$ and $\varepsilon=4$. One can observe the existence of eigenvalues $\Lambda^{(\alpha)}\gtrsim 6.08=\gamma^{(crit)}_{23}/\varepsilon$ (where $\gamma^{(crit)}_{23}$ is given by Eq.~\eqref{eq:condTPE23}) for which the dispersion relation is positive (see red dots in panel $(b)$) and thus patterns do emerge (see panels $(d)$ and $(f)$).

\section{Conclusions}
\label{sec:conc}

In this work, we have explored the phenomenon of diffusion-driven instabilities in Matrix-Weighted Networks (MWNs), a recently introduced framework in which the modeling of interactions between node variables relies on matrix weights that encode both interaction strength and directional transformations. Our study has extended the classical theory of Turing pattern formation, originally formulated for systems coupled via scalar-weighted edges, to this more general setting. 

A key notion in our analysis is \textit{coherence}, a structural property of MWNs that ensures the composition of transformation matrices along any oriented cycle equals the identity. As also established in the context of global synchronization~\cite{gallo2025global}, we have shown that coherence is a necessary condition for the emergence of Turing instabilities in MWNs. Indeed, only when the MWN is coherent, one can disentangle the mix of variables created by the transformation weights via the matrix $\mathcal{S}$; the latter allows us to reduce the supra-Laplacian $\mathcal{L}$ to a scalar Laplacian $\bar{\mathcal{L}} = \mathcal{S}\mathcal{L}\mathcal{S}^\top$, whose spectral decomposition can be used to explain the instability analysis. 

Furthermore, we have proposed a novel characterization of coherent MWNs, allowing us to deal with networks of any size and thus to overcome the limitation of the existing literature, where only small hand-made coherent networks have been considered. The proposed algorithm relies on the fact that link transformations can be rewritten as relative rotations between nodes; the method is thus very general and can open the way to many more applications of MWN.

Here, we have considered edge weights defined as rotation matrices, thereby imposing specific structural constraints on the class of admissible dynamical systems, namely, those invariant under the action of such matrices. Notice that the proposed framework can be further generalized to include general orthogonal matrices by imposing additional constraints on the dynamical systems, namely, the invariance with respect to these matrices. 

We have focused on the emergence of diffusion-driven instabilities by considering three different models, namely, the Stuart-Landau (SL) model, an abstract model invariant under $2\pi/k-$ rotations, and the Lorenz system. In the case of the SL model, the analysis was carried out analytically in closed form. Exploiting the rotational symmetry of the dynamics in the $(x,y)$-plane, we have showed that the dispersion relation reduces to a simple spectral condition on the Laplacian eigenvalues, yielding a sharp critical threshold $\Lambda_{\text{crit}} = -\sigma_{\Re}/\mu_{\Re}$: modes with $\Lambda^{(\alpha)} > \Lambda_{\text{crit}}$ become unstable, while the homogeneous mode remains stable. A similar result has been provided for the abstract model invariant under $2\pi/k$ rotations, and it has allowed us to explore the role of discrete rotational symmetries of higher order in the pattern formation mechanism. Finally, for the Lorenz system, we have analytically proved that diagonal couplings can never lead to the emergence of Turing instability, regardless of the network topology. In contrast, we have proved that off-diagonal couplings admit a diffusion-driven instability for appropriate parameter regimes.  In all three settings, all the analytical predictions were confirmed by numerical simulations.

Our finding reveals an interesting interplay between the MWN structure, encoded in its matrix weights and coherence properties, and the dynamical system. Such an interplay is absent in classical scalar-weighted networks and represents a genuinely new feature introduced by the MWN setting. \textcolor{black}{Furthermore, our results have potential applications far beyond the study of pattern formation in networks. The ability to construct matrix-weighted diffusion interactions is particularly relevant to a wide range of real-world systems. For instance, in consensus problems on social networks, matrix-weighted interactions enable the coordination of agents whose states are described by multiple interdependent variables, thereby supporting more robust and reliable collective decision-making~\cite{tian2025matrix}. This framework is also promising for image analysis, where each node may represent an image characterized by several visual features. Matrix-valued interactions can naturally encode transformations, such as rotations, between similar images, leading to more accurate similarity measures and improved image organization, retrieval, and classification, see, e.g.,~\cite{pmlr-v196-barbero22a}.}   

For future investigation, it would be of interest to understand how the breaking of coherence affects the onset of pattern formation and whether approximate coherence can still lead to instabilities. \textcolor{black}{Moreover, another important theoretical question, that deserves to be investigated, concerns the structural robustness of the emerging Turing patterns against small perturbations or violations of the invariance conditions (see Eq.~\eqref{eq:condfO}) required by the proposed framework. 
If such conditions are only approximately satisfied, either due to a slight structural incoherence or due to small non-invariant perturbations in the local dynamics, the standard baseline for a Turing mechanism breaks down: mathematically, a perfectly homogeneous equilibrium (or its rotated counterpart) no longer exists. The system could develop patchy solutions, but they will nor result from a Turing mechanism; instead, they represent forced inhomogeneous steady states sustained by the inherent geometric conflict of the edge rotations. Investigating how the system relaxes toward these perturbed states, and quantifying the structural distance from perfect coherence and invariance using perturbation theory, represents a challenging direction for future work.}

\textcolor{black}{Another future challenge is to extend the definition and the constructive algorithm for guaranteeing coherence to the case of directed (asymmetric) networks: formalizing and verifying the algebraic coherence condition along fully directed cycles introduces intricate mathematical difficulties, leaving the directed matrix-weighted framework as an open problem for future studies.}


\section*{Author contributions}
A.G., W.S., and T.C. contributed equally to the project and the preparation of the manuscript.


\textcolor{black}{
\section*{Code Availability}
The source code used in this study is publicly available at: \url{https://gitlab.unamur.be/codes_turing_patterns_on_mwns}. If using this code, please cite this publication.}

\section*{Competing interests}
The authors declare no competing interests.

\bibliography{mybib.bib}

\clearpage

\onecolumngrid

\appendix

\section{Characterization of coherent MWNs}\label{app:coherence}

This appendix is devoted to provide an in-depth analysis of the characterization of coherent MWNs proposed in Proposition~\ref{prop:coherence}. As in~\cite{tian2025matrix}, we restrict our attention to reciprocal interactions. Accordingly, without loss of generality, we present the discussion in terms of undirected networks.

\begin{lemma}[Coherence on a cycle basis]\label{lemma:cohebasis}
Let $G=(V,E)$ be a MWN graph let $\mathcal B=\{\mathcal C_1,\dots,\mathcal C_\beta\}$ be a cycle basis of $G$.
Assume that for every edge $(i,j)\in E$ the link matrix $\mathbf R_{ij}$ is orthonormal.
If
\begin{align}\label{eq:cohebasis}
\prod_{(i,j)\in \mathcal C_k} \mathbf R_{ij} = \mathbf I_d,
\qquad \forall\, \mathcal C_k \in \mathcal B,
\end{align}
then
\begin{align}
\prod_{(i,j)\in \mathcal C} \mathbf R_{ij} = \mathbf I_d
\end{align}
for any cycle $\mathcal C$ in $G$.
\end{lemma}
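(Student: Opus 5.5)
The plan is to reduce the statement to Proposition~\ref{prop:coherence}. It suffices to construct node matrices $\mathbf Q_1,\dots,\mathbf Q_n\in O(d)$ with $\mathbf R_{ij}=\mathbf Q_i^\top\mathbf Q_j$ for every edge. Once these exist, the telescoping computation in the proof of (a)$\Rightarrow$(b) gives $\prod_{(i,j)\in\mathcal C}\mathbf R_{ij}=\mathbf I_d$ for every cycle $\mathcal C$ of $G$. Without loss of generality $G$ is connected, since otherwise one argues on each component. The argument splits into two parts: an algebraic part, which is routine, and a combinatorial choice of spanning tree, which is where the difficulty lies.

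\emph{Algebraic part.} Fix a spanning tree $T$ of $G$ and a root $k$. Set $\mathbf Q_k=\mathbf I_d$ and define $\mathbf Q_j^\top:=\mathbf R_{P_{kj}}$, the product of the $\mathbf R$'s along the unique tree path $P_{kj}$ from $k$ to $j$. Since $\mathbf R_{ji}=\mathbf R_{ij}^\top$, we get $\mathbf R_{ij}=\mathbf Q_i^\top\mathbf Q_j$ automatically on every tree edge. For each of the $\beta=m-n+1$ non-tree edges $e=(i,j)$, introduce the defect $\mathbf D_e:=\mathbf Q_i\mathbf R_{ij}\mathbf Q_j^\top\in O(d)$. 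The goal is then to show $\mathbf D_e=\mathbf I_d$ for all $e\notin T$.

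Inserting $\mathbf Q_v^\top\mathbf Q_v=\mathbf I_d$ at every vertex of a cycle $\mathcal C$ with base point $v_0$ yields
\begin{align}
\prod_{(i,j)\in\mathcal C}\mathbf R_{ij}=\mathbf Q_{v_0}^\top\Big(\prod_{e\in\mathcal C\setminus T}\mathbf D_e^{\pm1}\Big)\mathbf Q_{v_0},
\end{align}
where the product runs over the non-tree edges of $\mathcal C$ in the order they are traversed, and the exponent $-1$ occurs when $e$ is crossed against its orientation. Consequently, if every non-tree edge of $\mathcal C$ except one, $e^\ast$, already has trivial defect, then hypothesis \eqref{eq:cohebasis} for $\mathcal C$ forces $\mathbf D_{e^\ast}=\mathbf I_d$.

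\emph{Combinatorial part.} I will choose $T$ adapted to $\mathcal B$ so that the non-tree edges can be labelled $e_1,\dots,e_\beta$ with $e_k\in\mathcal C_k$ and $e_k\notin\mathcal C_l$ for all $l>k$. With this labelling, $\mathcal C_\beta$ contains the single non-tree edge $e_\beta$, so $\mathbf D_{e_\beta}=\mathbf I_d$. Next, $\mathcal C_{\beta-1}$ contains among the non-tree edges only $e_{\beta-1}$ and possibly $e_\beta$, whose defect is already trivial, so $\mathbf D_{e_{\beta-1}}=\mathbf I_d$. Downward induction on $k$ then kills every defect, and this proves the lemma.

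\emph{Main obstacle.} The hard step is the existence of such a triangular labelling. I will try a greedy deletion in the order $\mathcal C_\beta,\mathcal C_{\beta-1},\dots$. At step $k$ the non-tree edges $e_{k+1},\dots,e_\beta$ have already been chosen, and I pick $e_k\in\mathcal C_k$ not lying in any later basis cycle and such that $G\setminus\{e_k,\dots,e_\beta\}$ stays connected. Removing an edge of a cycle never disconnects the graph; after $\beta$ deletions $n-1$ edges remain and the graph is still connected, so the complement is a spanning tree.

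The two constraints, avoiding later cycles and keeping the edge on a cycle of the current graph, must be checked using the GF(2)-independence of $\mathcal B$. Independence guarantees that $\mathcal C_k$ is not contained in the edge union of the later basis cycles, which is what should provide a valid choice of $e_k$. If this exchange argument fails for some pathological basis, my fallback is to apply the elimination to the incidence matrix (basis cycles $\times$ non-tree edges), which is invertible over GF(2). I would then check whether the required row operations can be realised by concatenating cycles at common vertices. It is precisely at this point that non-commutativity of $O(d)$ could interfere, so I expect this step to need the most care.
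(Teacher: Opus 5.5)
Your algebraic reduction is sound, up to a transposition slip. With $\mathbf Q_j^\top:=\mathbf R_{P_{kj}}$ you get $\mathbf R_{ij}=\mathbf Q_i\mathbf Q_j^\top$ on tree edges, so define $\mathbf Q_j:=\mathbf R_{P_{kj}}$ instead, as the paper does. Whenever a labelling with $e_k\in\mathcal C_k\setminus\bigcup_{l>k}\mathcal C_l$ exists, your downward induction is correct.

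The gap is the existence of that labelling, and the fact you plan to use for it is false. GF(2)-independence only says that $\mathcal C_k$ is not the symmetric difference of later basis cycles. It does not say that $\mathcal C_k$ has an edge outside their union. Take $K_{3,3}$ with sides $\{a_1,a_2,a_3\}$, $\{b_1,b_2,b_3\}$, and the four Hamiltonian cycles through $a_1b_1$: $\Gamma_1=(a_1,b_1,a_2,b_2,a_3,b_3)$, $\Gamma_2=(a_1,b_1,a_2,b_3,a_3,b_2)$, $\Gamma_3=(a_1,b_1,a_3,b_2,a_2,b_3)$, $\Gamma_4=(a_1,b_1,a_3,b_3,a_2,b_2)$. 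They are independent over GF(2) and $\beta=9-6+1=4$, so they form a cycle basis. Yet each is contained in the union of the other three. Whichever one you call $\mathcal C_1$ therefore has no admissible $e_1$, and the greedy deletion must stall.

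No fallback can repair this, because the lemma itself fails for this basis. Take $d=2$ and $\mathbf R=\mathbf I_2$ on the spanning tree $\{a_1b_1,a_1b_2,a_1b_3,a_2b_1,a_3b_1\}$. Set $\mathbf R_{a_2b_2}=\mathbf R_{a_3b_3}=\mathrm{rot}(2\pi/3)$ and $\mathbf R_{a_2b_3}=\mathbf R_{a_3b_2}=\mathrm{rot}(-2\pi/3)$, with $\mathbf R_{ji}=\mathbf R_{ij}^\top$. The product along each $\Gamma_i$ is $\mathrm{rot}(\pm 2\pi/3)^3=\mathbf I_2$. Along the $4$-cycle $(a_1,b_1,a_3,b_2)=\Gamma_1\oplus\Gamma_2\oplus\Gamma_3$ it is $\mathrm{rot}(-2\pi/3)\neq\mathbf I_2$.

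Even for commuting rotations the holonomy factors only through $H_1(G;\mathbb Z)$. The integer matrix of $\Gamma_1,\dots,\Gamma_4$ with respect to the fundamental cycles has determinant $\pm 3$. That is odd, so the $\Gamma_i$ form a GF(2) basis, but it is not unimodular, so $\mathbb Z_3$-valued holonomies go undetected.

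The paper's symmetric-difference proof has exactly this hole. With the orientations listed, over $\mathbb Z$ one gets $\Gamma_1+\Gamma_2+\Gamma_3=(0,2,1,0)$ in the coefficients of $a_2b_2,a_2b_3,a_3b_2,a_3b_3$. So the edge $a_2b_3$, shared by $\Gamma_2$ and $\Gamma_3$, is traversed twice in the same direction and contributes $\mathbf R_{a_2b_3}^2$, not $\mathbf R_{a_2b_3}\mathbf R_{b_3a_2}$. One checks that no choice of orientations makes all shared edges cancel, and without commutativity the two factors are not even adjacent. That bookkeeping is valid only for holonomies in an abelian group of exponent $2$, such as $\{\mathbf I_3,\mathbf R_\pi\}$ in the Lorenz example.

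What your argument does prove is the lemma for bases admitting your triangular labelling. In particular it covers the fundamental cycle basis of any spanning tree, where each basis cycle has a single non-tree edge and all defects vanish at once; this is the version one needs to certify coherence. For the networks actually built in the paper, Proposition~\ref{prop:coherence}, (a)$\Rightarrow$(b), already yields coherence directly.
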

\begin{proof} Since $\mathcal B$ is a cycle basis, it follows that any cycle $\mathcal C$ of G can be written as the symmetric difference of a finite number of cycles in $\mathcal B$, i.e., $\mathcal C=\mathcal C_{k_1}\oplus\dots\oplus\mathcal C_{k_m}$. Notice that each edge $(i,j)$ that appears twice in the symmetric difference corresponds to the factor $\mathbf R_{ij}\mathbf R_{ji}$ in the ordered product; hence, since $\mathbf R_{ij}$ is orthonormal, $\mathbf R_{ij}^{-1}=\mathbf R_{ji}$, and the corresponding factor cancels out in the ordered product.

By assumption, Eq.~\eqref{eq:cohebasis} holds, i.e., the product of link matrices along each cycle $\mathcal C_k\in\mathcal B$ equals the identity matrix. Therefore, the product along $\mathcal C$ also equals the identity.
\end{proof}

\begin{figure}[ht!]
    \centering
    \includegraphics[width=0.95\linewidth]{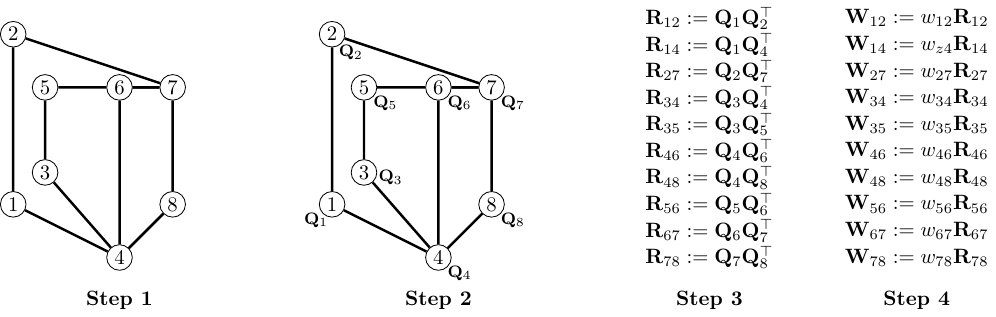}
    \caption{\textbf{Construction of a coherent MWNs.} \textbf{Step 1}: generate underlying topology. \textbf{Step 2}: assign a $d\times d$ orthonormal matrix $\mathbf Q_i\in O(d)$ to each node. \textbf{Step 3}: For each edge, define the transformation matrix as the product of the orthonormal matrix of an endpoint and the transpose of the other one. \textbf{Step 4}: For each edge, define the corresponding weighted matrix as the product between a scalar weight $w_{ij}\in U[w_{\min},w_{\max}]$ and the corresponding transformation matrix.}
    \label{fig:buildcoherent}
\end{figure}

\begin{figure}[ht]
    \centering
    \includegraphics[width=\linewidth]{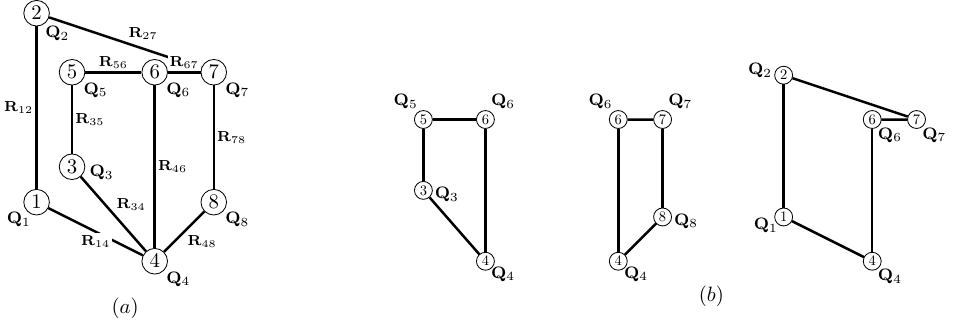}
    \caption{Panel $(a)$ - \textbf{Coherent MWN $G$ defined in Fig.~\ref{fig:buildcoherent}}. $G$ has six distinct cycles, i.e., $\mathcal C_1=(3,5,6,4,3)$, $\mathcal C_2=(4,6,7,8,4)$, $\mathcal C_3=(1,2,7,6,4,1)$, $\mathcal C_4=\{1,2,7,8,4,1\}$, $\mathcal C_5=\{3,4,8,7,6,5,3\}$ and $\mathcal C_6=\{1,2,7,6,5,3,4,1\}$. Panel $(b)$ - \textbf{Cycle basis $\mathcal B$ of $G$}.}
    \label{fig:proofcoherence}
\end{figure}

Let us discuss the illustrative example in Fig.~\ref{fig:proofcoherence}. A cycle basis $\mathcal{B}$ of the graph $G$ in Fig.~\ref{fig:proofcoherence}$(a)$ is given by the three cycles shown separately in fig~\ref{fig:proofcoherence}$(b)$:
\begin{itemize}
    \item $\mathcal{C}_1 = (3, 5, 6, 4, 3)$;
    \item $\mathcal{C}_2 = (4, 6, 7, 8, 4)$;  
    \item $\mathcal C_3=(1,2,7,6,4,1)$.
\end{itemize}
Indeed, notice that we can write the remaining cycles $\mathcal C_4$, $\mathcal C_5$ and $\mathcal C_6$ in terms of such a basis:
\begin{itemize}
    \item $\mathcal{C}_4 = (1,2,7,8,4,1) = \mathcal C_2\oplus\mathcal C_3 = (\mathcal C_2\cup\mathcal C_3)\backslash(\mathcal C_2\cap\mathcal C_3)$;
    \item $\mathcal{C}_5 = (3,4,8,7,6,5,3) = \mathcal C_1\oplus\mathcal C_2 = (\mathcal C_1\cup\mathcal C_2)\backslash(\mathcal C_1\cap\mathcal C_2)$;  
    \item $\mathcal{C}_6 = (1,2,7,6,5,3,4,1) = \mathcal C_1\oplus\mathcal C_3 = (\mathcal C_1\cup\mathcal C_3)\backslash(\mathcal C_1\cap\mathcal C_3)$.
\end{itemize}
To verify the coherence condition, we need to check that the product of link matrices equals identity for each cycle $\mathcal C_1=(3,5,6,4,3)$, $\mathcal C_2=(4,6,7,8,4)$, and $\mathcal C_3=(1,2,7,6,4,1)$:
\begin{align}
\mathcal{C}_1:\quad
\mathbf{R}_{35} \mathbf{R}_{56} \mathbf{R}_{64} \mathbf{R}_{43}
&= (\mathbf{Q}_3^\top \mathbf{Q}_5)
   (\mathbf{Q}_5^\top \mathbf{Q}_6)
   (\mathbf{Q}_6^\top \mathbf{Q}_4)
   (\mathbf{Q}_4^\top \mathbf{Q}_3) \nonumber \\
&= \mathbf{Q}_3^\top
   (\mathbf{Q}_5 \mathbf{Q}_5^\top)
   (\mathbf{Q}_6 \mathbf{Q}_6^\top)
   (\mathbf{Q}_4 \mathbf{Q}_4^\top)
   \mathbf{Q}_3 \nonumber \\
&= \mathbf{Q}_3^\top \mathbf{Q}_3
 = \mathbf{I}_d;
\\[1ex]
\mathcal{C}_2:\quad
\mathbf{R}_{46} \mathbf{R}_{67} \mathbf{R}_{78} \mathbf{R}_{84}
&= (\mathbf{Q}_4^\top \mathbf{Q}_6)
   (\mathbf{Q}_6^\top \mathbf{Q}_7)
   (\mathbf{Q}_7^\top \mathbf{Q}_8)
   (\mathbf{Q}_8^\top \mathbf{Q}_4) \nonumber \\
&= \mathbf{Q}_4^\top
   (\mathbf{Q}_6 \mathbf{Q}_6^\top)
   (\mathbf{Q}_7 \mathbf{Q}_7^\top)
   (\mathbf{Q}_8 \mathbf{Q}_8^\top)
   \mathbf{Q}_4 \nonumber \\
&= \mathbf{Q}_4^\top \mathbf{Q}_4
 = \mathbf{I}_d;
\\[1ex]
\mathcal{C}_3:\quad
\mathbf{R}_{12} \mathbf{R}_{27} \mathbf{R}_{76}
\mathbf{R}_{64} \mathbf{R}_{41}
&= (\mathbf{Q}_1^\top \mathbf{Q}_2)
   (\mathbf{Q}_2^\top \mathbf{Q}_7)
   (\mathbf{Q}_7^\top \mathbf{Q}_6)
   (\mathbf{Q}_6^\top \mathbf{Q}_4)
   (\mathbf{Q}_4^\top \mathbf{Q}_1) \nonumber \\
&= \mathbf{Q}_1^\top
   (\mathbf{Q}_2 \mathbf{Q}_2^\top)
   (\mathbf{Q}_7 \mathbf{Q}_7^\top)
   (\mathbf{Q}_6 \mathbf{Q}_6^\top)
   (\mathbf{Q}_4 \mathbf{Q}_4^\top)
   \mathbf{Q}_1 \nonumber \\
&= \mathbf{Q}_1^\top \mathbf{Q}_1
 = \mathbf{I}_d.
\end{align}
By Lemma~\ref{lemma:cohebasis}, coherence of $\mathcal{B}$ implies coherence of all cycles in the graph. 

Let us now consider the opposite direction of Proposition~\ref{prop:coherence}, i.e., given the coherent MWN $G$ in Fig.~\ref{fig:proofcoherence} to each edge $(i,j)$ to which an orthonormal matrix $\mathbf R_{ij}$ is assigned, we show how to explicitly construct the orthonormal matrices $\mathbf{Q}_1, \ldots, \mathbf{Q}_8 \in O(d)$ such that $\mathbf R_{ij}=\mathbf Q_i^\top\mathbf Q_j$ for any edge $(i,j)$.

First, let us choose a reference node in $V=\{1,\dots,8\}$ and set an initial matrix. Without loss of generality, we select node $k=4$ as the reference one and set $\mathbf Q_4=\mathbf I_d$. 
Second, let us construct the orthonormal matrices to be assigned to the other nodes via paths from the reference node $k=4$. For any $j\in V\backslash\{4\}$, we choose a path from node $4$ to node $j$ and define $\mathbf Q_j:=\mathbf Q_4\mathbf R_{P_{4j}}=\mathbf R_{P_{4j}}$. Explicitly, we then have:
\begin{itemize}
    \item $P_{41} = (4, 1)$: $\mathbf{Q}_1 = \mathbf{R}_{41} = \mathbf{R}_{14}^\top$;    
    \item $P_{42} = (4, 1, 2)$: $\mathbf{Q}_2 = \mathbf{R}_{41} \mathbf{R}_{12} 
        = \mathbf{R}_{14}^\top \mathbf{R}_{12}$;
    \item $P_{43} = (4, 3)$: $\mathbf{Q}_3 = \mathbf{R}_{43} = \mathbf{R}_{34}^\top$;
    \item $P_{45} = (4, 3, 5)$: $\mathbf{Q}_5 = \mathbf{R}_{43} \mathbf{R}_{35} 
        = \mathbf{R}_{34}^\top \mathbf{R}_{35}$
    
    \item $P_{46} = (4, 6)$: $\mathbf{Q}_6 = \mathbf{R}_{46}$;
    \item $P_{47} = (4, 6 , 7)$: $\mathbf{Q}_7 = \mathbf{R}_{46} \mathbf{R}_{67}$;
    \item $P_{48} = (4, 6, 7, 8)$: $\mathbf{Q}_8 = \mathbf{R}_{46} \mathbf{R}_{67} \mathbf{R}_{78}$.
\end{itemize}
Third, to conclude, let us verify that $\mathbf R_{ij}=\mathbf Q_i^\top\mathbf Q_j$ for any edge $(i,j)$. 

For edge $(1,2)$, we have
\begin{align}
    \mathbf Q_1^\top\mathbf Q_2 = (\mathbf R_{14}^\top)^\top\mathbf R_{14}^\top\mathbf R_{12} = \mathbf R_{14}\mathbf R_{14}^\top\mathbf R_{12} = \mathbf R_{12}. 
\end{align}

For edge $(2,7)$, we have:
\begin{align}
    \mathbf Q_2^\top\mathbf Q_7 = (\mathbf{R}_{14}^\top \mathbf{R}_{12})^\top\mathbf{R}_{46} \mathbf{R}_{67} = \mathbf{R}_{12}^\top\mathbf{R}_{14}\mathbf{R}_{46} \mathbf{R}_{67} = \mathbf{R}_{21}\mathbf{R}_{14}\mathbf{R}_{46} \mathbf{R}_{67} = \mathbf R_{27}, 
\end{align}
where the last equality follows by the coherence condition on cycle $\mathcal C_3$. 

For edge $(5,6)$, we have:
\begin{align}
    \mathbf Q_5^\top\mathbf Q_6 = (\mathbf{R}_{34}^\top \mathbf{R}_{35})^\top\mathbf{R}_{46} = \mathbf{R}_{35}^\top\mathbf{R}_{34}\mathbf{R}_{46} = \mathbf R_{56},
\end{align}
where the last equality follows by the coherence condition on cycle $\mathcal C_1$.

For edge $(3,5)$, we have:
\begin{align}
    \mathbf Q_3^\top\mathbf Q_5 = (\mathbf{R}_{43})^\top\mathbf{R}_{34}^\top \mathbf{R}_{35} = \mathbf R_{34}\mathbf{R}_{34}^\top \mathbf{R}_{35} = \mathbf R_{35}.
\end{align}

For edge $(6,7)$, we have:
\begin{align}
    \mathbf Q_6^\top\mathbf Q_7 = (\mathbf{R}_{46})^\top\mathbf{R}_{46} \mathbf{R}_{67} = \mathbf R_{64}\mathbf{R}_{46}\mathbf{R}_{67} = \mathbf{R}_{67}.
\end{align}

For edge $(7,8)$, we have:
\begin{align}
    \mathbf Q_7^\top\mathbf Q_8 = (\mathbf{R}_{46} \mathbf{R}_{67})^\top\mathbf{R}_{46} \mathbf{R}_{67} \mathbf{R}_{78} = \mathbf{R}_{67}^\top\mathbf{R}_{46}^\top\mathbf{R}_{46} \mathbf{R}_{67} \mathbf{R}_{78}=\mathbf{R}_{78}.
\end{align}

Similarly, the proof also holds for the remaining edges $(1,4), (3,4), (4,6)$ and $(4,8)$. Notice that the coherence property ensures that this construction is 
well-defined, i.e., different path choices from node 4 to node $j$ yield 
the same matrix $\mathbf{Q}_j$. For instance, if we consider the path $P_{45}'=(4,6,5)$, we could define $\tilde{\mathbf Q}_5=\mathbf R_{46}\mathbf R_{65}$. Since from the coherence condition on cycle $\mathcal C_1$ we have that $\mathbf R_{46}\mathbf R_{65}\mathbf R_{53}\mathbf R_{34}=\mathbf I_d$, we conclude that
\begin{align}
    \tilde{\mathbf Q}_5 = \mathbf R_{46}\mathbf R_{65} = \mathbf R_{34}^\top\mathbf R_{53}^\top = \mathbf R_{34}^\top\mathbf R_{35} = \mathbf Q_5.
\end{align}

\textcolor{black}{Finally, in the following Algorithm~\ref{alg:coherent_mwn}, we provide a schematic summary of the algorithmic procedure for constructing a coherent MWN.}

\begin{algorithm*}
\DontPrintSemicolon
\KwIn{A baseline scalar, undirected network topology $G = (V, E)$ with $n$ nodes, a set of orthogonal $d\times d$ matrices and two scalar weights $w_{\min},w_{\max}\in\mathbb R$, $w_{\min}<w_{\max}$.}
\KwOut{A coherent matrix-weighted adjacency matrix $\mathbf W$ where $\mathbf W_{ij} = w_{ij}\mathbf R_{ij}$.}
\quad Select an arbitrary root node $k\in V$ and set its coordinate transformation matrix to the identity, $\mathbf Q_k = \mathbf I_d$.\;
\qquad \textbf{for each} node $i \in V\backslash\{k\}$
choose an arbitrary orthogonal transformation matrix $\mathbf Q_i$ from the given set;\;
\qquad \textbf{for each}  $(i, j) \in E$ compute the edge matrix transformation: $\mathbf R_{ij} = \mathbf Q_i^\top \mathbf Q_j$;\;
\qquad define the final matrix weight: $\mathbf W_{ij} = w_{ij}\mathbf R_{ij}$, where $w_{ij}\in U[w_{\min},w_{\max}]$.\;
\caption{\label{alg:coherent_mwn} \textcolor{black}{\textbf{Algorithmic procedure for constructing a coherent MWN.} Algorithmic procedure for constructing a coherent MWN of arbitrary size and with an arbitrary underlying undirected topology.}}
\end{algorithm*}

\textcolor{black}{Let us observe that, because the scalar weights, $w_{ij}$, do not contribute to the coherence or lack thereof, the latter can be arbitrarily chosen.}

\section{Linear stability analysis of the Stuart-Landau model}\label{app:SL}

In this appendix, we present a detailed derivation of the linear stability of a network of Stuart-Landau (SL) oscillators, by including the conditions for diffusion-driven instabilities and the absence of stable limit cycles near the origin.

Let us consider $n$ identical SL oscillators, whose dynamics are governed by

\begin{align}\label{eq:SLapp}
\frac{d}{dt}
\begin{pmatrix}
 x\\y
\end{pmatrix} =&
\begin{pmatrix}
 \sigma_{\Re} & -\sigma_{\Im}\\
 \sigma_{\Im} & \sigma_{\Re}
\end{pmatrix}
\begin{pmatrix}
 x\\y
\end{pmatrix}-  (x^2+y^2)
\begin{pmatrix}
 \beta_{\Re} & -\beta_{\Im}\\
 \beta_{\Im} & \beta_{\Re}
\end{pmatrix}
\begin{pmatrix}
 x\\y
\end{pmatrix},
\end{align}
where $\sigma=\sigma_{\Re}+i\sigma_{\Im}$ and $\beta=\beta_{\Re}+i\beta_{\Im}$ are complex model parameters, and coupled by a diffusive-like nonlinear function
\begin{align}
    \vec{h}(x_\ell,y_\ell):=(x_\ell^2+y_\ell^2)^{\frac{m-1}{2}}
\begin{pmatrix}
 \mu_{\Re} & -\mu_{\Im}\\
 \mu_{\Im} & \mu_{\Re}
\end{pmatrix}
\begin{pmatrix}
 x_\ell\\y_\ell
\end{pmatrix},
\end{align}
with $\mu=\mu_{\Re}+i\mu_{\Im}$ being the complex coupling strength and $m\ge 1$ controlling the nonlinearity of the coupling. The evolution of the $j$--th oscillator is, thus, given by
\begin{align}
\label{eq:SLmatnet}
\frac{d}{dt}
\begin{pmatrix}
 x_j\\y_j
\end{pmatrix} &=
\begin{pmatrix}
 \sigma_{\Re} & -\sigma_{\Im}\\
 \sigma_{\Im} & \sigma_{\Re}
\end{pmatrix}
\begin{pmatrix}
 x_j\\y_j
\end{pmatrix} -  (x_j^2+y_j^2)
\begin{pmatrix}
 \beta_{\Re} & -\beta_{\Im}\\
 \beta_{\Im} & \beta_{\Re}
\end{pmatrix}
\begin{pmatrix}
 x_j\\y_j
\end{pmatrix}-\sum_\ell \mathcal{L}_{j\ell}\left[ (x_\ell^2+y_\ell^2)^{\frac{m-1}{2}}
\begin{pmatrix}
 \mu_{\Re} & -\mu_{\Im}\\
 \mu_{\Im} & \mu_{\Re}
\end{pmatrix}
\begin{pmatrix}
 x_\ell\\y_\ell
\end{pmatrix}\right]\nonumber\\
&=:\vec{f}(x_j,y_j)-\sum_\ell \mathcal{L}_{j\ell}\vec{h}(x_\ell,y_\ell)\, ,
\end{align}
 being $\mathcal{L}$ the supra-Laplace matrix. Through simple algebraic calculations, it is easy to show that such system has an equilibrium point at the origin, i.e., for $x_j=y_j=0$ for every oscillator $j$. 

Let us now restrict to the case $m=1$, so that the nonlinear coupling term results in reality a linear one; let us observe that if $m>1$, the linearization of the coupling term near the origin will vanish. Hance, by linearizing Eq.~\eqref{eq:SLmatnet} about $x_j=y_j=0$ leads to 
\begin{align}
\frac{d}{dt}
    \begin{pmatrix}
 {x}_j\\ {y}_j
\end{pmatrix}=\begin{pmatrix}
 \sigma_{\Re} & -\sigma_{\Im}\\
 \sigma_{\Im} & \sigma_{\Re}
\end{pmatrix}\begin{pmatrix}
 x_j\\y_j
\end{pmatrix}-\sum_{\ell}\mathcal{L}_{\ell j}\begin{pmatrix}
 \mu_{\Re} & -\mu_{\Im}\\
 \mu_{\Im} & \mu_{\Re}
\end{pmatrix}
\begin{pmatrix}
 x_\ell\\y_\ell
\end{pmatrix}.
\end{align}
Disregarding the coupling term, the characteristic polynomial of the local system is
\begin{align}
p(\lambda) = (\sigma_{\Re}-\lambda)^2 + \sigma_{\Im}^2\,,
\end{align}
with eigenvalues
\begin{align}
\lambda = \sigma_{\Re} \pm i \sigma_{\Im}.
\end{align}
Hence, for $\sigma_{\Re}<0$, the origin is a stable solution.

Let now consider $\{\vec{\phi}^{(\alpha)}\}$ to be an orthonormal eigenbasis of $\bar{\mathbf{L}}$, i.e., $\bar{\mathbf{L}}\vec{\phi}^{(\alpha)} = \Lambda^{(\alpha)} \vec{\phi}^{(\alpha)}$, for $\alpha=1,\dots,n$.
Projecting the linear system onto the eigenbasis yields
\begin{align}
    \begin{cases}
        \dfrac{d{x}_j}{dt} &= \sum_\alpha\dfrac{d{\hat{x}}_\alpha}{dt}\vec{\phi}^{(\alpha)}_j = \sum_\alpha(\sigma_{\Re}\hat{x}_\alpha-\sigma_{\Im}\hat{y}_\alpha)\phi^{(\alpha)}_j-\sum_{\ell,\alpha}(\mu_{\Re}\hat{x}_\alpha-\mu_{\Im}\hat{y}_\alpha)L_{j\alpha}\phi^{(\alpha)}_\ell\\
        &= \sum_\alpha(\sigma_{\Re}\hat{x}_\alpha-\sigma_{\Im}\hat{y}_\alpha)\phi^{(\alpha)}_j-\sum_{\alpha}(\mu_{\Re}\hat{x}_\alpha-\mu_{\Im}\hat{y}_\alpha)\Lambda^{(\alpha)}\phi^{(\alpha)}_j, \\
        \dfrac{d{y}_j}{dt} &= \sum_\alpha\dfrac{d{\hat{y}}_\alpha}{dt}\vec{\phi}^{(\alpha)}_j = \sum_\alpha(\sigma_{\Im}\hat{x}_\alpha+\sigma_{\Re}\hat{y}_\alpha)\phi^{(\alpha)}_j-\sum_{\ell,\alpha}(\mu_{\Im}\hat{x}_\alpha+\mu_{\Re}\hat{y}_\alpha)L_{j\alpha}\phi^{(\alpha)}_\ell\\
        &= \sum_\alpha(\sigma_{\Im}\hat{x}_\alpha+\sigma_{\Re}\hat{y}_\alpha)\phi^{(\alpha)}_j-\sum_{\alpha}(\mu_{\Im}\hat{x}_\alpha+\mu_{\Re}\hat{y}_\alpha)\Lambda^{(\alpha)}\phi^{(\alpha)}_j.
    \end{cases}
\end{align}
Then, exploiting the orthogonality of the eigenbasis, we obtain
\begin{align}
    \begin{cases}
        \dfrac{d{\hat{x}}_\alpha}{dt} = \sigma_{\Re}\hat{x}_\alpha-\sigma_{\Im}\hat{y}_\alpha-\Lambda^{(\alpha)}(\mu_{\Re}\hat{x}_\alpha-\mu_{\Im}\hat{y}_\alpha),\\
        \dfrac{d{\hat
        y}_\alpha}{dt} = \sigma_{\Im}\hat{x}_\alpha+\sigma_{\Re}\hat{y}_\alpha - \Lambda^{(\alpha)}(\mu_{\Im}\hat{x}_\alpha+\mu_{\Re}\hat{y}_\alpha),
    \end{cases}
\end{align}
i.e., in a more compact form,
\begin{align}
    \frac{d}{dt}\begin{pmatrix}
        \hat{x}_\alpha \\ \hat{y}_\alpha
\end{pmatrix} = \left[\begin{pmatrix}
 \sigma_{\Re} & -\sigma_{\Im}\\
 \sigma_{\Im} & \sigma_{\Re}
\end{pmatrix}-\Lambda^{(\alpha)}\begin{pmatrix}
 \mu_{\Re} & -\mu_{\Im}\\
 \mu_{\Im} & \mu_{\Re}
\end{pmatrix}\right]\begin{pmatrix}
        \hat{x}_\alpha \\ \hat{y}_\alpha
\end{pmatrix}=:\mathbf{M}(\Lambda^{(\alpha)})\begin{pmatrix}
        \hat{x}_\alpha \\ \hat{y}_\alpha
\end{pmatrix}\, .
\end{align}
The eigenvalues of $\mathbf{M}(\Lambda^{(\alpha)})$, for each mode $\alpha$, read
\begin{align}
    \lambda_\alpha=(\sigma_{\Re}-\Lambda^{(\alpha)}\mu_{\Re}) \pm i|\sigma_{\Im}+\Lambda^{(\alpha)}\mu_{\Im}|.
\end{align}
It follows that the homogeneous mode ($\Lambda^{(1)} = 0$) is stable for $\sigma_{\Re}<0$, while, since $\Lambda^{(\alpha)} > 0$, higher order modes ($\alpha>1$) may become unstable if $\mu_{\Re}$ is \textit{sufficiently} positive, i.e., if
\begin{align}
\mu_{\Re} > - \frac{\sigma_{\Re}}{\Lambda^{(\alpha)}}\, .
\end{align}  
Equivalently, one may conclude that, once the dynamical parameters $\mu_{\Re}$ and $\sigma_{\Re}$ are fixed, the condition for having instability reduces to requiring that the spectrum is sufficiently positive.

Finally, let us now show that the limit cycle does not exist if $\sigma_{\Re}<0$. Notice that, given $r_j^2=x_j^2+y_j^2$, we have
\begin{align}
    2r_j\dot{r}_j&=2x_j(\sigma_{\Re} x_j-\sigma_{\Im} y_j)-2x_jr^2_j(\beta_{\Re} x_j-\beta_{\Im} y_j) + 2y_j(\sigma_{\Im} x_j+\sigma_{\Re} y_j) -2y_jr^2_j(\beta_{\Im} x_j+\beta_{\Re} y_j)\notag\\
    &= 2x_j^2\sigma_{\Re} -2x_jy_j\sigma_{\Im} -2x_j^2r^2_j\beta_{\Re} -2x_jy_jr^2_j\beta_{\Im} +2x_jy_j\sigma_{\Im}+2y_j^2\sigma_{\Re}-2x_jy_jr^2_j\beta_{\Im}-2y^2_jr^2_j\beta_{\Re}\notag\\
    &= 2r^2_j\sigma_{\Re}-2r^4_j\beta_{\Re}=2r_j^2(\sigma_{\Re}-r^2_j\beta_{\Re}),
\end{align}
i.e.,
\begin{align}
    \dot{r}_j &= r_j(\sigma_{\Re}-r^2_j\beta_{\Re}) =: f(r_j),
\end{align}
which is zero if either $r_j=0$ or $r^2_j=\frac{\sigma_{\Re}}{\beta_{\Re}}$. Notice that the first solution $r_j=0$ is stable when $\sigma_{\Re}<0$. On the other hand, since $\sigma_{\Re}<0$, evaluating
\begin{align}
    f'(r_j) = \sigma_{\Re}-3r_j^2\beta_{\Re},
\end{align}
on the cycle radius $r^2_j=\frac{\sigma_{\Re}}{\beta_{\Re}}$, leads to
\begin{align}
    f'\left(\sqrt{\frac{\sigma_{\Re}}{\beta_{\Re}}}\right)=-2\sigma_{\Re}>0.
\end{align}
Thus, for $\sigma_{\Re}<0$, the nontrivial limit cycle turns out to be unstable, and no stable oscillations exist near the origin. Consequently, any instability observed in the network is purely network-driven, not due to the intrinsic limit cycle of the individual oscillators.


\section{Linear stability analysis of the Lorenz MWN}\label{app:lorenz}

In this section, we provide the various conditions under which Turing patterns may emerge in the case of Lorenz oscillators coupled via MWNs. To achieve this goal we separately consider the six cases where $E_{ij}=1$, for a given couple $i,j\in\{1,2,3\}$, $i\neq j$ and all the remaining entries vanish. For each case we determine  the coefficients of the characteristic polynomial of the matrix $\mathbf{M}(\Lambda^{(\alpha)})$ and deduce the conditions for the Turing instability by using the Routh-Hurwitz criterion. To lighten the notation we will use in the following the variable $\gamma=\varepsilon \Lambda^{(\alpha)}$. Let us also recall that we assume
\begin{equation}
\label{eq:stabcondLor}
    \sigma>\beta+1 \text{ and }1<\rho<\rho_H=\frac{\sigma(\sigma+\beta+3)}{\sigma-\beta-1}
\end{equation}
to ensure the stability of the equilibrium $\vec{x}^*=(\sqrt{\beta(\rho-1)},\sqrt{\beta(\rho-1)},\rho-1)^\top$. \textcolor{black}{For a comprehensive overview of the six coupling cases analyzed below, see Table~\ref{tab:lorenz_summary}, which summarizes the respective polynomial coefficients and critical instability thresholds.}

\subsection{Case $\mathbf{E}_{12}=1$}

From the definition of $\mathbf{M}(\gamma)$ given by~\eqref{eq:MlambdaLor} we get 
\begin{align}
    \mathbf{M}(\gamma) =
    \begin{pmatrix}
        -\sigma & \sigma-\gamma & 0 \\
        1 & -1 & -x^* \\
        y^* & x^* & -\beta
    \end{pmatrix}\, ,
\end{align}
and thus the coefficients of the characteristic polynomial are given by
\begin{eqnarray}
a_2 &=& \sigma+\beta+1\, ,\label{eq:a212}\\
a_1 &=&\gamma + \beta\rho+\beta\sigma\, ,\label{eq:a112}\\
a_0 &=& \gamma\beta(2-\rho)+2\beta\sigma(\rho-1)\label{eq:a012}\, .
\end{eqnarray}
We can observe that $a_2>0$ and $a_1>0$ are always positive. A necessary condition for the emergence of Turing patterns is thus $a_0<0$. By using the value of $a_0$ given by~\eqref{eq:a012} we can conclude that $a_0<0$ if
\begin{equation}
\label{eq:TPE12}
    \varepsilon \Lambda^{(\alpha)}=\gamma > 2\sigma \frac{\rho-1}{\rho-2}=\gamma^{(crit)}_{12}\text{ and $\rho>2$}\, ,
\end{equation}
while if $\rho<2$ then $a_0>0$ for all $\gamma>0$, indeed~\eqref{eq:a012} returns
\begin{equation*}
\frac{a_0}{\beta}=\gamma (2-\rho)+2\sigma (\rho-1) >0\, ,
\end{equation*}
because we also have $\rho>1$.

Let us now consider the last condition for the onset of Turing instability, $a_2a_1<a_0$. By using Eqs.~\eqref{eq:a212},~\eqref{eq:a112} and~\eqref{eq:a012} we obtain
\begin{eqnarray*}
  a_2a_1-a_0&= &(\sigma+\beta+1)(\gamma + \beta\rho+\beta\sigma)-\gamma\beta(2-\rho)+2\beta\sigma(\rho-1)=\gamma [\sigma+\beta(\rho-1)+1]+\beta(\rho+\sigma)(\sigma+\beta+1)-2\sigma\beta(\rho-1)=\\
  &=&\gamma [\sigma+\beta(\rho-1)+1]+\beta\rho(\beta+1-\sigma)+\sigma\beta(3+\sigma+\beta)>0\, .
\end{eqnarray*}
Hence $a_2a_1>a_0$. In conclusion Turing patterns can emerge if Eq.~\eqref{eq:TPE12} is satisfied.

\subsection{Case $\mathbf{E}_{13}=1$}

In this case, we get 
\begin{align}
    \mathbf{M}(\gamma) =
    \begin{pmatrix}
        -\sigma & \sigma & -\gamma \\
        1 & -1 & -x^* \\
        y^* & x^* & -\beta
    \end{pmatrix}\, ,
\end{align}
and thus the coefficients of the characteristic polynomial are
\begin{align}
\begin{aligned}
a_2 &= \sigma+\beta+1\, ,\\
a_1 &=\beta\rho+\beta\sigma+\gamma\sqrt{\beta(\rho-1)}\, ,\\
a_0 &= 2\gamma\sqrt{\beta(\rho-1)}+2\beta\sigma(\rho-1)\, .
\end{aligned}
\end{align}
Because $\rho > 1$, we can thus straightforward realize that $a_2>0$, $a_1>0$ and $a_0>0$. 

So Turing patterns emerge if and only if the last Routh-Hurwitz condition, $a_2a_1-a_0<0$ is violated. A direct computation returns
\begin{equation*}
  a_2a_1-a_0= (\sigma+\beta+1)[\beta(\rho+\sigma)+\gamma x^*]-[2\gamma x^*+2\beta\sigma(\rho-1)]=\beta[\rho(-\sigma+\beta+1)+\sigma (\sigma+\beta+3)]+\gamma x^*(\sigma+\beta-1)\, .
\end{equation*}
Let us remember that $\sigma>\beta+1$ and thus $\sigma+\beta-1>2\beta>0$, moreover $\rho (\sigma-\beta-1)<\sigma(\sigma+\beta+3)$, we can thus conclude that
\begin{equation*}
  a_2a_1-a_0=\beta[\rho(-\sigma+\beta+1)+\sigma (\sigma+\beta+3)]+\gamma x^*(\sigma+\beta-1)>0\, .
\end{equation*}
In conclusion, in the case $E_{13}=1$, Turing patterns never emerge.

\subsection{Case $\mathbf{E}_{21}=1$}

The matrix $\mathbf{M}(\gamma)$ is now given by
\begin{align}
    \mathbf{M(\gamma}) =
    \begin{pmatrix}
        -\sigma & \sigma & 0 \\
        1-\gamma & -1 & -x^* \\
        y^* & x^* & -\beta
    \end{pmatrix}\, .
\end{align}
The coefficients of the characteristic polynomial are
\begin{eqnarray}
a_2 &=& \sigma+\beta+1\, ,\label{eq:a221}\\
a_1 &=& \beta(\rho+\sigma)+\gamma\sigma\, ,\label{eq:a121}\\
a_0&=&\beta[2\rho\sigma+\gamma\sigma-2\sigma]\, .
\label{eq:a021}
\end{eqnarray}
Clearly $a_2>0$ and $a_1>0$, moreover
\begin{equation*}
    a_0=\beta[2\rho\sigma+\gamma\sigma-2\sigma]=\beta[2\sigma(\rho-1)+\gamma\sigma-]>0\, ,
\end{equation*}
because $\rho>1$.
It remains to check the last condition, i.e.,
\begin{eqnarray*}
  &\, &a_2a_1-a_0= \\
  &=&(\sigma+\beta+1)[\beta(\rho+\sigma)+\gamma\sigma]-\beta[2\rho\sigma+\gamma\sigma-2\sigma]=\beta[\sigma\rho+\sigma^2+\beta\rho+\beta\sigma+\rho+\sigma]+\gamma\sigma(\sigma+\beta+1)-\beta[2\sigma(\rho-1)+\gamma\sigma]=\\
  &=&\beta[-\rho(\sigma-\beta-1)+\sigma(\sigma+\beta+3)]+\gamma\sigma (\sigma+1)\, ,
\end{eqnarray*}
by recalling again $\rho (\sigma-\beta-1)<\sigma(\sigma+\beta+3)$, we can conclude that $a_2a_1-a_0>0$, hence not Turing patterns can develop in this case.

\subsection{Case $\mathbf{E}_{23}=1$}

In this case, the matrix $\mathbf{M}(\gamma)$ is
\begin{align}
    \mathbf{M}(\gamma) =
    \begin{pmatrix}
        -\sigma & \sigma & 0 \\
        1 & -1 & -x^*-\gamma \\
        y^* & x^* & -\beta
    \end{pmatrix}\, ,
\end{align}
and the coefficients of the characteristic polynomial are
\begin{align}
\begin{aligned}
a_2 &= \sigma+\beta+1\, ,\\
a_1 &=\beta(\rho+\sigma)+\gamma\sqrt{\beta(\rho-1)}\, ,\\
a_0 &= 2\beta\sigma(\rho-1)+2\gamma\sigma\sqrt{\beta(\rho-1)}\, .
\end{aligned}
\end{align}
The three coefficients $a_2$, $a_1$ and $a_0$ are strictly positive, and therefore the emergence of patterns depends solely on the fourth Routh-Hurwitz condition based on the sign of $a_2a_1-a_0$. 

We then have
\begin{eqnarray*}
  &\, &a_2a_1-a_0= \\
  &=&(\sigma+\beta+1)[\beta(\rho+\sigma)+\gamma x^*]-2\sigma[\beta(\rho-1)+\gamma x^*]=\gamma x^*(-\sigma+\beta+1)+\beta[(\rho+\sigma)(\sigma+\beta+1)-2\sigma(\rho-1)]=\\
  &=&-\gamma x^*(\sigma-\beta-1)+\beta[-\rho(\sigma-\beta-1)+\sigma(\sigma+\beta+3)]\, ,
\end{eqnarray*}
observe that $\sigma-\beta-1>0$ and $-\rho(\sigma-\beta-1)+\sigma(\sigma+\beta+3)>0$, hence $a_2a_1-a_0<0$ if
\begin{equation}
    \label{eq:condTPE23}
    \varepsilon \Lambda^{(\alpha)}=\gamma > \beta\frac{\sigma(\sigma+\beta+3)-\rho(\sigma-\beta-1)}{x^*(\sigma-\beta-1)}=\gamma^{(crit)}_{23}\, ,
\end{equation}

\subsection{Case $\mathbf{E}_{31}=1$}

This case has already been considered in the main text; the aim of this section is to prove the negativity of $\gamma_2$ defined in Eq.~\eqref{eq:cond2Lor}, which is hereby recalled for the sake of simplicity.
\begin{equation*}
\gamma_2=2\sqrt{\beta(\rho-1)}-\frac{\beta(\rho+\sigma)(\beta+\sigma+1)}{\sigma\sqrt{\beta(\rho-1)}}\, .
\end{equation*}
Let us rewrite $\gamma_2$ as follows
\begin{eqnarray*}
\gamma_2&=&\frac{\beta}{\sigma\sqrt{\beta(\rho-1)}} \left[2\sigma (\rho-1)-(\rho+\sigma)(\beta+\sigma+1)\right]=\frac{\beta}{\sigma\sqrt{\beta(\rho-1)}} \left[2\rho\sigma-2\sigma-\sigma(\beta+\sigma+1)-\rho(\beta+1)-\rho\sigma\right]=\\
&=&
\frac{\beta}{\sigma\sqrt{\beta(\rho-1)}} \left[\rho(\sigma-\beta-1)-\sigma(\beta+\sigma+3)\right]\, .
\end{eqnarray*}
Let us recall that by assumption
\begin{equation*}
    \rho<\frac{\sigma(\sigma+\beta+3)}{\sigma-\beta-1}\, ,
\end{equation*}
from which it follows that $\gamma_2<0$.

\subsection{Case $\mathbf{E}_{32}=1$}

In this case we have 
\begin{align}
    \mathbf{M}(\gamma) =
    \begin{pmatrix}
        -\sigma & \sigma & 0 \\
        1 & -1 & -x^* \\
        y^* & x^*-\gamma & -\beta
    \end{pmatrix}\, ,
\end{align}
and the coefficients of the characteristic polynomial are given by
\begin{align}
\begin{aligned}
a_2 &= \sigma+\beta+1\, ,\\
a_1 &= \beta(\rho+\sigma)-\gamma\sqrt{\beta(\rho-1)}\, ,\\
a_0 &= 2\beta\sigma(\rho-1)-\gamma\sigma\sqrt{\beta(\rho-1)}\, .
\end{aligned}
\end{align}
One can easily realize that $a_0<0$ if
\begin{equation}
    \label{eq:condTP320}
    \varepsilon \Lambda^{(\alpha)}=\gamma >2\sqrt{\beta(\rho-1)}=x^*\, ,
\end{equation}
and $a_1<0$ if
\begin{equation}
    \label{eq:condTP321}
    \varepsilon \Lambda^{(\alpha)}=\gamma >\frac{\beta(\rho+\sigma)}{\sqrt{\beta(\rho-1)}}=\gamma^{(crit)}_{32}\, .
\end{equation}
Moreover we have $a_2a_1-a_0<0$ if and only if
\begin{equation*}
\varepsilon \Lambda^{(\alpha)}=\gamma >\frac{
\beta(\rho+\sigma)(\sigma+\beta+1)+2\sigma \beta(\rho-1)}{(\beta+1)\sqrt{\beta(\rho-1)}}
=\tilde{\gamma}^{(crit)}_{32}\, .
\end{equation*}
In conclusion Turing patterns emerge if 
\begin{equation}
    \label{eq:condTP32}
    \varepsilon\Lambda^{(\alpha)} >\min\Big\{x^*,\gamma^{(crit)}_{32},\tilde{\gamma}^{(crit)}_{32}\Big\}\, .
\end{equation}

\begin{table*}[ht!]
\centering
\begin{ruledtabular}
\begin{tabular}{ccccc}
\textbf{Case} & $a_2$ & $a_1$ & $a_0$ & \textbf{Instability Condition} \\ \hline
$E_{12} = 1$ & $\sigma+\beta+1$ & $\gamma+\beta\rho+\beta\sigma$ & $\gamma\beta(2-\rho)+2\beta\sigma(\rho-1)$ & $\rho > 2$ and $\gamma > \gamma_{12}^{(crit)} = 2\sigma\frac{\rho-1}{\rho-2}$ \\
$E_{13} = 1$ & $\sigma+\beta+1$ & $\beta\rho+\beta\sigma+\gamma\sqrt{\beta(\rho-1)}$ & $2\gamma\sqrt{\beta(\rho-1)}+2\beta\sigma(\rho-1)$ & Turing patterns never emerge ($a_2a_1 - a_0 > 0$) \\
$E_{21} = 1$ & $\sigma+\beta+1$ & $\beta\rho+\beta\sigma+\gamma\sigma$ & $\beta[2\sigma(\rho-1)+\gamma\sigma]$ & Turing patterns never emerge ($a_2a_1 - a_0 > 0$) \\
$E_{23} = 1$ & $\sigma+\beta+1$ & $\beta(\rho+\sigma)+\gamma\sqrt{\beta(\rho-1)}$ & $2\beta\sigma(\rho-1)+2\gamma\sigma\sqrt{\beta(\rho-1)}$ & $\gamma > \gamma_{23}^{(crit)} = \beta\frac{\sigma(\sigma+\beta+3)-\rho(\sigma-\beta-1)}{\sqrt{\beta(\rho-1)}(\sigma-\beta-1)}$ \\
$E_{31} = 1$ & $\sigma+\beta+1$ & $\beta(\rho+\sigma)$ & $2\sigma\beta(\rho-1)-\gamma\sigma\sqrt{\beta(\rho-1)}$ & $\gamma > \gamma_{31}^{(crit)} = 2\sqrt{\beta(\rho-1)}$ \\
$E_{32} = 1$ & $\sigma+\beta+1$ & $\beta(\rho+\sigma)-\gamma\sqrt{\beta(\rho-1)}$ & $2\beta\sigma(\rho-1)-\gamma\sigma\sqrt{\beta(\rho-1)}$ & $\gamma > \min\{x^*, \gamma_{32}^{(crit)}, \tilde{\gamma}_{32}^{(crit)}\}$ \\
\end{tabular}
\end{ruledtabular}
\caption{\textcolor{black}{\textbf{Summary of the Turing instability conditions for six off-diagonal coupling cases of the Lorenz system.} Summary of the Routh-Hurwitz polynomial coefficients and Turing instability conditions for the six off-diagonal coupling cases of the Lorenz system, with $\gamma = \epsilon\Lambda^{(\alpha)}$ and $\sigma > \beta + 1$.}}
\label{tab:lorenz_summary}
\end{table*}

\textcolor{black}{
\section{Proof of the commutation and invariance conditions}
\label{app:proof_eq35}
We briefly recall that the dynamics of each of the $n$ identical Stuart-Landau oscillators anchored to the nodes of a MWN is given by
\begin{align}
\label{eq:SLnonlinApp}
\frac{d}{dt}
\begin{pmatrix}
 x_j\\y_j
\end{pmatrix} &=
\begin{pmatrix}
 \sigma_{\Re} & -\sigma_{\Im}\\
 \sigma_{\Im} & \sigma_{\Re}
\end{pmatrix}
\begin{pmatrix}
 x_j\\y_j
\end{pmatrix} -  (x_j^2+y_j^2)
\begin{pmatrix}
 \beta_{\Re} & -\beta_{\Im}\\
 \beta_{\Im} & \beta_{\Re}
\end{pmatrix}
\begin{pmatrix}
 x_j\\y_j
\end{pmatrix}-\sum_\ell \mathcal{L}_{j\ell}\left[ (x_\ell^2+y_\ell^2)^{\frac{m-1}{2}}
\begin{pmatrix}
 \mu_{\Re} & -\mu_{\Im}\\
 \mu_{\Im} & \mu_{\Re}
\end{pmatrix}
\begin{pmatrix}
 x_\ell\\y_\ell
\end{pmatrix}\right]\nonumber\\
&=:\vec{f}(x_j,y_j)-\sum_\ell \mathcal{L}_{j\ell}\vec{h}(x_\ell,y_\ell)\, ,
\end{align}
where $\vec f(x_j,y_j)$ is the nonlinear function defined by the above equation and $\vec{h}(x_\ell,y_\ell):=(x_\ell^2+y_\ell^2)^{\frac{m-1}{2}}
\begin{pmatrix}
 \mu_{\Re} & -\mu_{\Im}\\
 \mu_{\Im} & \mu_{\Re}
\end{pmatrix}
\begin{pmatrix}
 x_\ell\\y_\ell
\end{pmatrix}$ defines the coupling function with complex coupling strength $\mu=\mu_{\Re}+i\mu_{\Im}$, and $\mathcal L$ is the supra-Laplace matrix of the MWN. In this Appendix, we show that the invariance conditions
\begin{align}
   \vec f(\mathbf{R}\vec{x}) = \mathbf R\vec f(\vec x)\quad\text{and}\quad \vec h(\mathbf R\vec x) = \mathbf R\vec h(\vec x),
\end{align}
hold for any general $2 \times 2$ orthogonal matrix $\mathbf R\in SO(2)$, for which the following holds true $\det\mathbf R = 1$.
}

\textcolor{black}{
Note that every real $\mathbf R\in SO(2)$ with $\det\mathbf R=1$ can be uniquely parameterized by a single rotation angle $\theta$. Therefore, we restrict the proof to the canonical trigonometric form $\mathbf R=\begin{pmatrix}
    \cos\theta & -\sin\theta \\
    \sin\theta & \cos\theta
\end{pmatrix}$, without loss of generality. 
The invariance condition, then, follows by simple algebraic computations:
\begin{align}
\begin{pmatrix}
    \sigma_{\Re} & -\sigma_{\Im} \\
    \sigma_{\Im} & \sigma_{\Re}
\end{pmatrix}
\begin{pmatrix}
    \cos\theta & -\sin\theta \\
    \sin\theta & \cos\theta
\end{pmatrix}\begin{pmatrix}
    x_1 \\ x_2
\end{pmatrix} &= \begin{pmatrix}
    \sigma_{\Re} & -\sigma_{\Im} \\
    \sigma_{\Im} & \sigma_{\Re}
\end{pmatrix}
\begin{pmatrix}
    x_1\cos\theta-x_2\sin\theta \\
    x_1\sin\theta+x_2\cos\theta
\end{pmatrix} \notag\\
&=
\begin{pmatrix}
    x_1\sigma_{\Re}\cos\theta-x_2\sigma_{\Re}\sin\theta-x_1\sigma_{\Im}\sin\theta-x_2\sigma_{\Im}\cos\theta \\
    x_1\sigma_{\Im}\cos\theta-x_2\sigma_{\Im}\sin\theta+x_1\sigma_{\Re}\sin\theta+x_2\sigma_{\Re}\cos\theta
\end{pmatrix}\notag\\
&=\begin{pmatrix}
    x_1(\sigma_{\Re}\cos\theta-\sigma_{\Im}\sin\theta)-x_2(\sigma_{\Re}\sin\theta+\sigma_{\Im}\cos\theta) \\   x_1(\sigma_{\Im}\cos\theta+\sigma_{\Re}\sin\theta)+x_2(\sigma_{\Re}\cos\theta-\sigma_{\Im}\sin\theta)
\end{pmatrix}
\end{align}
and,
\begin{align}
\begin{pmatrix}
    \cos\theta & -\sin\theta \\
    \sin\theta & \cos\theta
\end{pmatrix}
\begin{pmatrix}
    \sigma_{\Re} & -\sigma_{\Im} \\
    \sigma_{\Im} & \sigma_{\Re}
\end{pmatrix}\begin{pmatrix}
    x_1 \\ x_2
\end{pmatrix} &=
\begin{pmatrix}
    \cos\theta & -\sin\theta \\
    \sin\theta & \cos\theta
\end{pmatrix}\begin{pmatrix}
    x_1\sigma_{\Re}-x_2\sigma_{\Im} \\
    x_1\sigma_{\Im}+x_2\sigma_{\Re}
\end{pmatrix} \notag\\
&=
\begin{pmatrix}
    x_1\sigma_{\Re}\cos\theta-x_2\sigma_{\Im}\cos\theta-x_1\sigma_{\Im}\sin\theta-x_2\sigma_{\Re}\sin\theta \\
    x_1\sigma_{\Re}\sin\theta-x_2\sigma_{\Im}\sin\theta+x_1\sigma_{\Im}\cos\theta+x_2\sigma_{\Re}\cos\theta
\end{pmatrix}\notag\\
&=\begin{pmatrix}
    x_1(\sigma_{\Re}\cos\theta-\sigma_{\Im}\sin\theta)-x_2(\sigma_{\Re}\sin\theta+\sigma_{\Im}\cos\theta) \\   x_1(\sigma_{\Im}\cos\theta+\sigma_{\Re}\sin\theta)+x_2(\sigma_{\Re}\cos\theta-\sigma_{\Im}\sin\theta)
\end{pmatrix}.
\end{align} 
Similarly, the computations above hold true for $\beta$ and $\mu$.
}

\end{document}